  \let\oldparagraph\paragraph
  \renewcommand{\paragraph}{
    \@ifstar
      \xxxParagraphStar
      \xxxParagraphNoStar
  }
  \newcommand{\xxxParagraphStar}[1]{\oldparagraph*{#1}\mbox{}}
  \newcommand{\xxxParagraphNoStar}[1]{\oldparagraph{#1}\mbox{}}
  \let\oldsubparagraph\subparagraph
  \renewcommand{\subparagraph}{
    \@ifstar
      \xxxSubParagraphStar
      \xxxSubParagraphNoStar
  }
  \newcommand{\xxxSubParagraphStar}[1]{\oldsubparagraph*{#1}\mbox{}}
  \newcommand{\xxxSubParagraphNoStar}[1]{\oldsubparagraph{#1}\mbox{}}
\patchcmd\longtable{\par}{\if@noskipsec\mbox{}\fi\par}{}{}
\def\maxwidth{\ifdim\Gin@nat@width>\linewidth\linewidth\else\Gin@nat@width\fi}
\def\maxheight{\ifdim\Gin@nat@height>\textheight\textheight\else\Gin@nat@height\fi}
\def\fps@figure{htbp}
  \renewcommand*\contentsname{Table of contents}
  \newcommand\contentsname{Table of contents}
  \renewcommand*\listfigurename{List of Figures}
  \newcommand\listfigurename{List of Figures}
  \renewcommand*\listtablename{List of Tables}
  \newcommand\listtablename{List of Tables}
  \renewcommand*\figurename{Figure}
  \newcommand\figurename{Figure}
  \renewcommand*\tablename{Table}
  \newcommand\tablename{Table}
\newcommand{\anon}{1}
\newtheorem{lemma}{Lemma}
\newtheorem{theorem}{Theorem}
\newcommand{\Var}{\mathrm{Var}}
\newcommand{\Cov}{\mathrm{Cov}}
\def\E{\mathbb{E}}
\def\bs{\bm{s}}
\def\bt{\bm{t}}
\def\bu{\bm{u}}
\def\bv{\bm{v}}
\def\bw{\bm{w}}
\def\bU{\bm{U}}
\def\bre{\bm{\mathrm{e}}}
\def\brw{\bm{\mathrm{w}}}
\def\bA{\bm{\mathrm{A}}}
\def\bB{\bm{\mathrm{B}}}
\def\bC{\bm{\mathrm{C}}}
\def\rd{\mathrm{d}}
\def\rw{\mathrm{w}}
\def\balpha{\boldsymbol{\alpha}}
\def\bbeta{\boldsymbol{\beta}}
\def\btheta{\boldsymbol{\theta}}
\def\bzeta{\boldsymbol{\zeta}}
\def\bSigma{\boldsymbol{\Sigma}}
\def\tbbeta{\hat{\boldsymbol{\beta}}}
\def\tbtheta{\hat{\boldsymbol{\theta}}}
\def\bQ{\bm{Q}}
\def\tC{\hat{C}}
\def\tV{\hat{V}}
\def\tomega{\hat{\omega}}
\begin{document}

\def\spacingset#1{\renewcommand{\baselinestretch}%
{#1}\small\normalsize} \spacingset{1}


\if1\anon
{
  \title{\bf On Ignorability of Preferential Sampling in Geostatistics}
    \author{Changqing Lu\\
    Centrum Wiskunde \& Informatica, Amsterdam\\
    and\\
    Ganggang Xu\\
    University of Miami, Coral Gables, FL\\
    and\\
    Junho Yang\\
    Academia Sinica, Taipei\\
    and\\
    Yongtao Guan\\
    The Chinese University of Hong Kong, Shenzhen\\
    }
  \date{}
  \maketitle
} \fi

\if0\anon
{
  \bigskip
  \bigskip
  \bigskip
  \begin{center}
    {\LARGE\bf Title}
\end{center}
  \medskip
} \fi

\bigskip
\begin{abstract}
Preferential sampling has attracted considerable attention in geostatistics since the pioneering work of \citet{diggle2010geostat}. A variety of likelihood-based approaches have been developed to correct estimation bias by explicitly modelling the sampling mechanism. While effective in many applications, these methods are often computationally expensive and can be susceptible to model misspecification. In this paper, we present a surprising finding: some existing non-likelihood-based methods that ignore preferential sampling can still produce unbiased and consistent estimators under the widely used framework of \citet{diggle2010geostat} and its extensions. We investigate the conditions under which preferential sampling can be ignored and develop estimators for both regression and covariance parameters without specifying the sampling mechanism parametrically. Simulation studies demonstrate clear advantages of our approach, including reduced estimation error, improved confidence interval coverage, and substantially lower computational cost. To show the practical utility, we further apply it to a tropical forest data set.
\end{abstract}

\noindent%
{\it Keywords:} Covariance and cross-covariance; Geostatistical models; Marked point processes.
\vfill

\newpage
\spacingset{1.8} 

\section{Introduction}
\label{sec:1}

Geostatistics models spatially continuous phenomena using data observed at discrete locations $\bs_1,\dots,\bs_n$ in a region of interest $S \subset\mathbb{R}^2$. A commonly used formulation is $Z_i(\bs_i) = \bw(\bs_i)^\top \bbeta + Y(\bs_i) + e_i$, where $\bw(\bs_i)\in \mathbb{R}^p$ denotes spatial covariates with associated regression coefficients $\bbeta$, $Y(\bs)$ is a latent zero-mean Gaussian process, and $e_i$ are independent Gaussian errors (nugget effects) with variance $\sigma_e^2$. The primary objectives are to consistently estimate the regression coefficients $\bbeta$, the covariance function of $Y(\bs)$, and the nugget variance $\sigma_e^2$.

In classical geostatistical models, sampling locations are typically assumed to be deterministic or independent from the underlying spatial process, in which case the standard maximum likelihood estimation (MLE) is generally preferred \citep{diggle2019book}. However, \citet{diggle2010geostat} pointed out that, in many applications, the process $Z(\bs)$ may depend on the locations at which it is observed. This phenomenon, termed as preferential sampling, can introduce substantial bias into the standard MLE, necessitating careful methodological adjustments. Recognizing its importance, a large body of research has focused on addressing this issue. For example, \citet{diggle2010geostat} proposed a marked point process framework, where the observed locations are modelled as a realization of a log-Gaussian Cox process \citep[LGCP]{moller1998lgcp}, and the corresponding spatial measurements $Z(\bs)$ are treated as marks generated from a Gaussian process. Within this framework, the dependence between locations and marks is conveniently captured through a parametric relationship between their respective Gaussian random fields, facilitating likelihood-based estimation and inference for all model parameters \citep{dinsdale2019tmb}.

To account for potential preferential sampling, the LGCP-based framework has been widely applied across various disciplines, such as ecology \citep{Pennino2018species} and physical oceanography~\citep{Dinsdale2019application}. Methodologically, \citet{Pati2011bayesian} developed a Bayesian approach for estimating the mean of the mark process, while \citet{Zidek2014spacetime} extended the framework to a space–time context. \citet{Ferreira2015sampling} employed it to guide the selection of new sampling locations, and \citet{Amaral2024spatiallyvarying} examined spatially varying sampling degrees. Within the point process literature, further extensions have been proposed to accommodate specialized data structures, e.g.\ modelling the shared latent field via functional analysis when replicated marked point processes are available \citep{fok2012functional,gervini2020joint,xu2020semi,yin2021row,xu2024bias}. Recently, \citet{Schliep2023weightedcl} and \citet{Hsiao2025inverseintensity} adopted composite likelihood approaches with intensity-related weights to improve parameter estimation for the mark process. However, no theoretical guarantees have yet been established for these methods under preferential sampling.

A major limitation of the likelihood-based approaches is the need to specify the generating mechanism of the preferential sampling, which imposes a parametric formulation on the dependence between the point process and the marks. Consequently, these methods can be susceptible to model misspecification. In this work, we present a surprising finding: under the framework of \citet{diggle2010geostat} and its extensions, some existing non-likelihood-based methods that ignore preferential sampling can still produce unbiased and consistent estimators. We carefully investigate the conditions under which preferential sampling can be ignored. Building on that, we develop estimators for both regression and covariance parameters without specifying a parametric sampling mechanism and establish statistical inference for the former. Our method is therefore applicable to a broader range of problems and, as shown in simulation studies, offers substantial computational gains over existing likelihood-based approaches.

The rest of the paper is organized as follows. Section~\ref{sec:2} introduces the geostatistical model under preferential sampling and outlines the technical conditions for theoretical results. Section~\ref{sec:3} examines the asymptotic behaviour of the least squares estimator for the regression coefficients and discusses the approximation of its asymptotic covariance matrix. In Section~\ref{sec:4}, we develop unbiased estimators for the parametric spatial covariance function and establish their consistency. Section~\ref{sec:5} presents numerical experiments that evaluate the proposed method and compare it with likelihood-based approaches. In Section~\ref{sec:6}, we apply our method to a tropical forest data set to demonstrate its practical utility. Finally, the paper concludes with a discussion of the main findings and potential directions for future research.

\section{Preliminaries and Technicalities}
\label{sec:2}

\subsection{The Geostatistical Model under Preferential Sampling}
\label{sec:2_1}

Recall the classical geostatistical model introduced in Section~\ref{sec:1}
\begin{equation}
    Z(\bs)=\bw(\bs)^{\top}\bbeta+Y(\bs)+e(\bs),
    \label{e:1}
\end{equation}
where $Y(\bs)$ is a zero-mean stationary Gaussian random field on $S$ with variance $\sigma_Y^2$ and covariance function $C_Y(\bs,\bt) = \Cov[Y(\bs), Y(\bt)]$. Let $\|\cdot\|$ denote the Euclidean distance. With slight abuse of notation, we assume $C_Y(\bs,\bt)=C_Y(\|\bs-\bt\|)$ for some function $C_Y(r)$, implying that $Y(\bs)$ is isotropic. A popular choice for $C_Y(r)$ is the Mat\'{e}rn covariance function 
\begin{equation}
    C(r;\btheta)=\sigma^2\frac{2^{1-\nu}}{\Gamma(\nu)}\left(\sqrt{2\nu}\frac{r}{\phi}\right)^\nu B_\nu\left(\sqrt{2\nu}\frac{r}{\phi}\right),
    \label{e:2}
\end{equation}
where $\phi$ and $\nu$ are the range and smoothness parameters, $B_\nu$ is the modified Bessel function of the second kind of order $\nu$, and $\Gamma$ is the gamma function.

To generalize the preferential sampling framework of \citet{diggle2010geostat}, we assume that the sampling locations are generated from an LGCP denoted by $N$ and defined also on $S$ with latent intensity
$
    \lambda(\bs)=\lambda_0(\bs)\exp[X(\bs)],
$
where $\lambda_0(\bs)$ is a baseline intensity and $X(\bs)$ is a zero-mean Gaussian random field with covariance function $C_X(\bs,\bt)=\Cov[X(\bs),X(\bt)]$. Consequently, the marginal first- and second-order intensity functions of $N$ are given by
\begin{equation*}
    \rho(\bs)=\mathbb{E}[\lambda(\bs)]=\lambda_0(\bs)\exp(\sigma_X^2/2)
\end{equation*}
and
\begin{equation*}
    \rho_2(\bs,\bt)=\mathbb{E}[\lambda(\bs)\lambda(\bt)]=\rho(\bs)\rho(\bt)\exp[C_X(\bs,\bt)].
\end{equation*}
To introduce the dependence between $X(\bs)$ and $Y(\bs)$, \citet{diggle2010geostat} assumed a constant baseline $\lambda_0(\bs)=\lambda_0$ and a proportional relationship $X(\bs)=\gamma Y(\bs)$ for some $\lambda_0 > 0$ and $\gamma \neq 0$. We extend this setting by allowing an arbitrary isotropic cross-covariance function $\Cov[X(\bs), Y(\bt)] = C_{XY}(||\bs - \bt||)$, without imposing a specific parametric form.

\subsection{Asymptotic Regime and Technical Conditions}
\label{sec:2_2}

Following \citet{diggle2010geostat}, we analyze the model (\ref{e:1}) within the framework of marked point process theory. Specifically, we adopt the standard increasing-domain regime (see, e.g.\ \citealp{guan2007asymptotics,xu2019stochastic,xu2023semiparametric}). Suppose that the observations of $Z(\bs)$ and $N$ are collected over a sequence of region $S_n$ that expand to $\mathbb{R}^2$ as $n\to \infty$. Let $\partial S_n$ denote the boundary of $S_n$ with perimeter $|\partial S_n|$. We assume that, for every $n\geq 1$, 
\begin{equation*}
    c_1n^2\leq|S_n|\leq c_2n^2,\quad c_1n\leq|\partial S_n|\leq c_2n,\quad \mathrm{for}\ \mathrm{some}\ 0<c_1\leq c_2<\infty. \tag{C1}
\end{equation*}
This condition ensures that $S_n$ grows in all directions and that $\partial S_n$ is not too irregular. 

To quantify the spatial dependence, we recall the definition of strong mixing coefficients \citep{rosenblatt1956mixing}. Define
\begin{equation*}
\begin{split}
    \alpha(q; k) \equiv \sup&\left\{|P (S_1 \cap S_2) - P(S_1)P(S_2)|:S_1 \in \mathcal{F}(T_1), S_2 \in \mathcal{F}(T_2),\right.\\
    &\left.T_1, T_2 \subset \mathbb{R}^2, |T_1| = |T_2| \leq q, d(T_1, T_2) \geq k\right\},
\end{split}
\end{equation*}
where $\mathcal{F(\cdot)}$ denotes the $\sigma$-algebra generated by the random events of $N$ that are in a subset of $\mathbb{R}^2$, and $d(T_1, T_2)$ denotes the maximal distance between $T_1$ and $T_2$. We assume that
\begin{equation*}
    \sup_q \alpha(q; k)/q = O(k^{-\epsilon}),\quad \mathrm{for}\ \mathrm{some}\ \epsilon>2, \tag{C2}
\end{equation*}
which requires the dependence between any two fixed sets decaying to zero at a polynomial rate of the inter-set distance $k$, while the decay rate also depends on the size of the sets $q$. An LGCP, as assumed in Section~\ref{sec:2_1}, satisfies this condition. 

In addition to (C1)--(C2), we impose regularity conditions on the spatial covariates $\bw(\bs)$ and the baseline intensity $\lambda_0(\bs)$:
\begin{equation*}
    \sup_{\bs\in \mathbb{R}^2} \|\bw(\bs)\|< \infty,\quad \sup_{\bs\in \mathbb{R}^2} \left|\lambda_0(\bs)\right|< \infty, \tag{C3}
\end{equation*}
on the covariance functions of the Gaussian random fields $X(\bs)$, $Y(\bs)$ and $e(\bs)$: 
\begin{equation*}
    \sup_{\bs,\bt\in \mathbb{R}^2} \left|C_X(\|\bs-\bt\|)\right|< \infty, \quad \sup_{\bs,\bt\in \mathbb{R}^2} \left|C_Y(\|\bs-\bt\|)\right|< \infty, \quad \sigma_e^2<\infty, \tag{C4}
\end{equation*}
and on the cross-covariance function between $X(\bs)$ and $Y(\bs)$: 
\begin{equation*}
    \sup_{\bs,\bt\in \mathbb{R}^2} \left|C_{XY}(\|\bs-\bt\|)\right|< \infty. \tag{C5}
\end{equation*}
Moreover, we assume that $C_Y(\|\bs-\bt\|)$ and $C_{XY}(\|\bs-\bt\|)$ are absolutely integrable everywhere on $\mathbb{R}^2$:
\begin{equation*}
    \sup_{\bs\in \mathbb{R}^2} \int_{\mathbb{R}^2} |C_Y(\|\bs-\bt\|)|\rd\bt< \infty,\quad
    \sup_{\bs\in \mathbb{R}^2} \int_{\mathbb{R}^2} |C_{XY}(\|\bs-\bt\|)|\rd\bt< \infty.
 \tag{C6}
\end{equation*}

Finally, we require that the $p\times p$ matrix
\begin{equation*}
    \int_{S_n} \rho(\bs)\bw(\bs)\bw(\bs)^\top\rd\bs \tag{C7}
\end{equation*}
is invertible, a technical condition necessary for deriving the asymptotic covariance matrix in Theorem~\ref{theorem:1}. Furthermore, for Theorem~\ref{theorem:3}, we assume that the parametric semi-variogram function defined in Section~\ref{sec:4},
\begin{equation*}
    \zeta(\|\bs-\bt\|;\theta)>0, \tag{C8}
\end{equation*}
has continuous partial derivatives with respect to $\btheta$. We also assume that the weight function $\rw(\bs,\bt)$ used in the objective functions $Q_{MC}(\btheta)$ and $Q_{CL}(\btheta)$ satisfies
\begin{equation*}
    \sup_{\bs,\bt\in \mathbb{R}^2}|\rw(\bs,\bt)|<\infty, \tag{C9}
\end{equation*}
and that the corresponding minimizers
\begin{equation*}
    \tbtheta_{MC}=\arg\min_{\btheta} Q_{MC}(\btheta),\quad \tbtheta_{CL}=\arg\min_{\btheta} Q_{CL}(\btheta)
\tag{C10}
\end{equation*}
on $S_n$ are unique.

It is worthy noting that, unlike most studies on marked point processes, we do not need to introduce an additional reference measure on the mark space. This is because, within our preferential sampling framework, $Z(\bs)$ is modelled as a spatial Gaussian process, whose moments, conditional on N, can be characterized through the joint correlations, i.e.\ via the cross-covariance function.

\section{Statistical Inference for the Regression Coefficients}
\label{sec:3} 

A primary objective with the geostatistical model (\ref{e:1}) is to estimate the regression coefficients $\bbeta$ for capturing spatial heterogeneity. Consider the least squares estimator
\begin{equation}
    \tbbeta=\left[\sum_{\bs \in N}\bw(\bs)\bw(\bs)^\top\right]^{-1}\sum_{\bm{s}\in N}\bw(\bs)Z(\bs).
    \label{e:3}
\end{equation}
In the absence of preferential sampling, $\tbbeta$ is a consistent estimator with well-established asymptotic properties \citep{banerjee2014book}. However, when the sampling locations are a realization of a point process that depends on $Z(\bs)$, the cross-correlation in between must be accounted for.

By applying the Campbell's theorem and the Stein's lemma, the expectations of the denominator and the numerator in (\ref{e:3}) read
\begin{equation*}
    \E\left[\sum_{\bs\in N}\bw(\bs)\bw(\bs)^\top\right]=\int_S \rho(\bs)\bw(\bs)\bw(\bs)^\top\rd\bs
\end{equation*}
and
\begin{equation*}
    \E\left[\sum_{\bs\in N} \bw(\bs) Z(\bs)\right]=\left[\int_S \rho(\bs)\bw(\bs)\bw(\bs)^\top\rd\bs\right]\bbeta+C_{XY}(0)\int_S \rho(\bs)\bw(\bs)\rd\bs.
\end{equation*}
Assuming that the first element of $\bw(\bs)$ corresponds to the intercept term, we can use the results above to analyze the asymptotic behaviour of the least squares estimator (\ref{e:3}) under the increasing-domain regime, as stated in the following theorem.

\begin{theorem}
\label{theorem:1}
    Let $\bbeta_0$ and $\tbbeta_n$ denote the true regression coefficients and the parameters estimated by (\ref{e:3}) on $S_n$. Under conditions (C1)--(C7), as $n\to \infty$, 
    \begin{equation*}
        |S_n|^{1/2}\bSigma_n^{-1/2}(\tbbeta_n-\bbeta^*_0)\xrightarrow{d}\mathrm{N}(\bm{0},\bm{I}_p),
    \end{equation*}
    where $\bm{I}_p$ is the $p\times p$ identity matrix, $\bbeta^*_0=\bbeta_0+C_{XY}(0)\balpha$ with  $\balpha=(1,0,\dots,0)^\top$, and
    \begin{equation*}
    \begin{split}
        \bSigma_n&=|S_n|\bA_n^{-1}(\bB_n+\bC_n)\bA_n^{-1},\quad \bA_n=\int_{S_n} \rho(\bs)\bw(\bs)\bw(\bs)^\top\rd\bs,\\ 
        \bB_n&=\int_{S_n} \rho(\bs)\left(\sigma_Y^2+\sigma_e^2\right)\bw(\bs)\bw(\bs)^\top\rd\bs,\\
        \bC_n&=\int_{S_n}\int_{S_n} \rho_2(\bs,\bt)\left[C_Y(\|\bs-\bt\|)+C_{XY}(\|\bs-\bt\|)^2\right]
        \bw(\bs)\bw(\bt)^\top\rd\bs\rd\bt.
    \end{split}
    \end{equation*}
\end{theorem}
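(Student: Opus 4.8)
The plan is to linearize the estimator and reduce everything to a central limit theorem for a single point-process functional. Writing $\bU_n=\sum_{\bs\in N}\bw(\bs)\bw(\bs)^\top$ and $\bV_n=\sum_{\bs\in N}\bw(\bs)Z(\bs)$ so that $\tbbeta_n=\bU_n^{-1}\bV_n$, I would substitute model~(\ref{e:1}) into $\bV_n$ and use that the first coordinate of $\bw(\bs)$ is the intercept, i.e.\ $\bw(\bs)^\top\balpha=1$ and hence $\bU_n\balpha=\sum_{\bs\in N}\bw(\bs)$, to obtain the exact identity
\[
\tbbeta_n-\bbeta_0^*=\bU_n^{-1}\bm{\xi}_n,\qquad \bm{\xi}_n=\sum_{\bs\in N}\bw(\bs)\bigl[Y(\bs)+e(\bs)-C_{XY}(0)\bigr].
\]
The centering constant $C_{XY}(0)$ is what makes the bias collapse onto $\balpha$: since $\int_{S_n}\rho(\bs)\bw(\bs)\rd\bs=\bA_n\balpha$, the mean of the numerator is exactly $C_{XY}(0)\bA_n\balpha$, which after multiplication by $\bA_n^{-1}$ shifts only the intercept and forces the limit $\bbeta_0^*=\bbeta_0+C_{XY}(0)\balpha$.

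Next I would compute the first two moments of $\bm{\xi}_n$ and match them to $\bB_n$ and $\bC_n$. Conditioning on the Gaussian fields makes $N$ Poisson with intensity $\lambda$, so the first- and second-order Campbell formulas turn the sums into integrals against $\rho$ and $\rho_2$, after which Stein's lemma evaluates the Gaussian expectations. The workhorse is the exponential-tilting identity $\E[e^{X(\bs)}g(Y(\bs))]=e^{\sigma_X^2/2}\,\E[g(Y(\bs)+C_{XY}(0))]$, which shifts the mean of $Y$ by its cross-covariance with $X$. Applied with $g(y)=y-C_{XY}(0)$ it gives $\E[\bm{\xi}_n]=\bm 0$; applied with $g(y)=(y-C_{XY}(0))^2$ it yields the diagonal (single-point) contribution $\bB_n$. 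The two-point version, which shifts both $Y(\bs)$ and $Y(\bt)$ by $C_{XY}(0)+C_{XY}(\|\bs-\bt\|)$, produces the off-diagonal contribution with integrand $C_Y+C_{XY}^2$, namely $\bC_n$; conditions (C4)--(C6) guarantee that these integrals are finite and grow at the order $|S_n|$, so that $\Var(\bm{\xi}_n)=\bB_n+\bC_n$.

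With the moments in hand, I would conclude in two steps. A law of large numbers, obtained from the same Campbell calculation for $\E[\bU_n]=\bA_n$ together with a variance bound that vanishes by the mixing condition (C2) and integrability (C6), shows $|S_n|^{-1}(\bU_n-\bA_n)\xrightarrow{p}\bm 0$, so $\bU_n^{-1}$ may be replaced by $\bA_n^{-1}$. The crux --- and the step I expect to be the main obstacle --- is the central limit theorem $(\bB_n+\bC_n)^{-1/2}\bm{\xi}_n\xrightarrow{d}\mathrm N(\bm 0,\bm I_p)$, which by the Cram\'er--Wold device I would reduce to a scalar sum $\bm b_n^\top\bm{\xi}_n$ over the point process. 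This is delicate precisely because $\bm{\xi}_n$ is a sum over a \emph{random} configuration whose points, through the driving field $X$, are correlated with the summand $Y+e-C_{XY}(0)$; the route is a big-block/small-block partition of $S_n$ into cells whose contributions are asymptotically independent, with the strong-mixing coefficient of (C2) (the exponent $\epsilon>2$ giving summable covariances) controlling the dependence between distant blocks and the perimeter bound in (C1) controlling edge effects, after which a Lindeberg-type CLT for mixing spatial sums applies. One also needs a uniform lower bound on the eigenvalues of $|S_n|^{-1}(\bB_n+\bC_n)$ to rule out degeneracy, which (C7) provides. Finally, Slutsky's theorem combines the two steps, and the matrix-algebra identity $\bSigma_n=|S_n|\bA_n^{-1}(\bB_n+\bC_n)\bA_n^{-1}$ converts the standardized limit into the stated form.
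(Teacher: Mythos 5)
Your proposal follows essentially the same route as the paper's proof: linearize the least-squares estimator at $\bbeta_0^*$ (the paper phrases this as a Taylor expansion of the estimating equation $\bre_n(\bbeta)=\bm 0$, which for linear least squares is exactly your identity $\tbbeta_n-\bbeta_0^*=\bU_n^{-1}\bm{\xi}_n$), compute the first two moments via Campbell's theorem plus Gaussian shift identities (your exponential-tilting formula is the paper's Lemma 1 in compact form, and your two-point calculation reproduces the paper's Lemma 1 combined with Isserlis' theorem, yielding mean zero and variance $\bB_n+\bC_n$), then prove the CLT by a blocking argument under the mixing condition (C2) and conclude by Slutsky. The only minor slip is attributing nondegeneracy of $|S_n|^{-1}(\bB_n+\bC_n)$ to (C7), which in the paper is the invertibility condition on $\bA_n$; this does not affect the argument.
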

\begin{proof}
    The proof is provided in the Supplementary Material.
\end{proof}

Theorem~\ref{theorem:1} shows that, despite preferential sampling, the least squares estimator~\eqref{e:3} remains unbiased for all regression coefficients in~$\bbeta$, except for the intercept term. If a consistent estimator of~$C_{XY}(0)$ is available, the intercept bias can be readily corrected via $\tilde{\bbeta} = {\tbbeta} - \hat{C}_{XY}(0)\balpha$. Moreover, this theorem also enables valid statistical inference for all parameters in~$\bbeta$, provided that consistent estimators of the sill $\omega=\sigma_Y^2+\sigma_e^2$, the spatial covariance function $C_Y(r)$ and the cross-covariance function $C_{XY}(r)$ are available.

Typically, we estimate the sill $\omega$ via the moment-based estimator:
\begin{equation}
    \tomega={\frac{1}{|N|}}\sum_{\bs\in N}\left[Z(\bs)-\bw(\bs)^\top\tbbeta\right]^2,
    \label{e:4}
\end{equation} 
where $|N|$ is the number of observations. Recalling the definition of the semi-variogram \citep{banerjee2014book}, $V_Y(r)=\omega-C_Y(r)$, estimating $C_Y(r)$ then reduces to estimating
\newpage
$V_Y(r)$, for which we use the kernel smoothing estimator:
\begin{equation}
    \tV_Y(r)=\frac{\mathop{\sum\sum}^{\neq}_{\bs,\bt\in N}\left\{\left[Z(\bs)-\bw(\bs)^\top\tbbeta\right]-\left[Z(\bt)-\bw(\bt)^\top\tbbeta\right]\right\}^2K_h(\|\bs-\bt\|-r)}
    {2\mathop{\sum\sum}^{\neq}_{\bs,\bt\in N}K_h(\|\bs-\bt\|-r)},
    \label{e:5}
\end{equation}
where $K_h(\cdot)=h^{-1}K(\cdot/h)$ with kernel function $K(\cdot)$ and bandwidth $h$. The superscript $\ne$ indicates that the sum is taken over pairs of distinct points in $N$. Similarly, the cross-covariance $C_{XY}(r)$ can be estimated as 
\begin{equation}
    \tC_{XY}(r)=\frac{\mathop{\sum\sum}^{\neq}_{\bs,\bt\in N}\left[Z(\bs)-\bw(\bs)^\top\tbbeta\right]K_h(\|\bs-\bt\|-r)}{\mathop{\sum\sum}^{\neq}_{\bs,\bt\in N}K_h(\|\bs-\bt\|-r)}.
    \label{e:6}
\end{equation}
Assuming that the semivariogram $V_Y(r)$ and the cross-covariance $C_{XY}(r)$ are smooth in a neighborhood of $r$, the following theorem establishes consistency of the three estimators (\ref{e:4})--(\ref{e:6}) under preferential sampling.

\begin{theorem}
\label{theorem:2}
 Under conditions (C1)--(C6), as $h_n\to 0$ and $|S_n|h_n\to \infty$, it holds that, for every $r>0$, $|\tomega-\omega|=O_p(|S_n|^{-1/2})$, $ |\tV_Y(r)-V_Y(r)|= O_p[h_n+(|S_n|h_n)^{-1/2}]$, and $ |\tC_{XY}-C_{XY}(r)|= O_p[h_n+(|S_n|h_n)^{-1/2}].$
\end{theorem}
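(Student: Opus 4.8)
The plan is to treat the three estimators uniformly by reducing each to an ``oracle'' version in which the plug-in $\tbbeta$ is replaced by its limit $\bbeta_0^\ast=\bbeta_0+C_{XY}(0)\balpha$ from Theorem~\ref{theorem:1}, computing the relevant first moments by conditioning on the latent Gaussian fields, and then bounding the stochastic fluctuations with the help of the mixing condition (C2). Write the fitted residual as $Z(\bs)-\bw(\bs)^\top\tbbeta=U(\bs)+\bw(\bs)^\top(\bbeta_0^\ast-\tbbeta)$, where $U(\bs)=Y(\bs)+e(\bs)-C_{XY}(0)$ is the oracle residual, using that the first entry of $\bw$ is the intercept so that $\bw(\bs)^\top\balpha=1$.

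I would first argue that the plug-in term is negligible. Since $\|\tbbeta-\bbeta_0^\ast\|=O_p(|S_n|^{-1/2})$ by Theorem~\ref{theorem:1} and $\bw$ is bounded by (C3), substituting the decomposition above into \eqref{e:4}--\eqref{e:6} and expanding the squares and products shows that every contribution involving $\bbeta_0^\ast-\tbbeta$ is $O_p(|S_n|^{-1/2})$ after normalization, by the product rule for stochastic orders (valid regardless of the dependence between $\tbbeta$ and the sums). Because $|S_n|^{-1/2}=o\!\big((|S_n|h_n)^{-1/2}\big)$ when $h_n\to0$, these terms are of smaller order than the kernel-estimation error in \eqref{e:5}--\eqref{e:6} and of exactly the claimed order for \eqref{e:4}; it thus suffices to analyze the oracle statistics built from $U$.

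For the first moments I would condition on $(X,Y,e)$, under which $N$ is Poisson with intensity $\lambda(\bs)=\lambda_0(\bs)e^{X(\bs)}$, so Campbell's formula converts the single and pairwise sums into integrals against $\lambda(\bs)$ and $\lambda(\bs)\lambda(\bt)$; taking the outer expectation over the fields reduces everything to Gaussian moments of the form $\E[g(Y,e)\,e^{X(\bs)}]$ and $\E[g(Y,e)\,e^{X(\bs)+X(\bt)}]$. The key algebraic tool is the exponential-tilting identity $\E[g(\cdot)e^{W}]=e^{\Var(W)/2}\,\E[g(\cdot+\Cov(W,\cdot))]$ for a mean-zero Gaussian $W$ jointly Gaussian with the field, which shifts $Y(\bs)$ by $\Cov(W,Y(\bs))$. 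With $W=X(\bs)$ the shift is $C_{XY}(0)$, and the constant $-C_{XY}(0)$ built into $U$ cancels it exactly, giving $\E\big[\sum_{\bs\in N}U(\bs)^2\big]=\omega\int_{S_n}\rho(\bs)\rd\bs$ and hence $\tomega\to\omega$ with no smoothing bias. With $W=X(\bs)+X(\bt)$ the shift is $C_{XY}(0)+C_{XY}(\|\bs-\bt\|)$ and, using $\lambda_0(\bs)\lambda_0(\bt)e^{\Var(W)/2}=\rho_2(\bs,\bt)$, the numerators of \eqref{e:5} and \eqref{e:6} reduce to $2\iint\rho_2(\bs,\bt)K_h(\|\bs-\bt\|-r)V_Y(\|\bs-\bt\|)\rd\bs\rd\bt$ and $\iint\rho_2(\bs,\bt)K_h(\|\bs-\bt\|-r)C_{XY}(\|\bs-\bt\|)\rd\bs\rd\bt$; for \eqref{e:5} the shift cancels in the difference $U(\bs)-U(\bt)$, and for \eqref{e:6} it cancels against the $-C_{XY}(0)$ in $U$ and leaves the off-diagonal value $C_{XY}(\|\bs-\bt\|)$. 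Dividing by the matching denominators and letting $h_n\to0$, the kernel localizes the distance at $r$, so by the assumed smoothness of $V_Y$ and $C_{XY}$ the ratio of expectations equals the target up to an $O(h_n)$ bias.

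Finally I would control the variances. For \eqref{e:4}, $\tomega-\omega$ is driven by $|N|^{-1}$ times the fluctuation of the single sum $\sum_{\bs\in N}U(\bs)^2$, which has variance $O(|S_n|)$ under (C2); since $|N|\asymp|S_n|$, this yields $O_p(|S_n|^{-1/2})$. For \eqref{e:5}--\eqref{e:6} the numerators and denominators are pair sums, and their variances involve the third- and fourth-order moment measures of the LGCP together with the joint Gaussian covariances; condition (C2) makes the contributions of well-separated tuples summable, so each variance is dominated by its near-diagonal part $\iint\rho_2(\bs,\bt)K_h(\|\bs-\bt\|-r)^2\rd\bs\rd\bt=O(|S_n|/h_n)$, using $K_h^2=O(h_n^{-1})$. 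As the denominators concentrate around means of order $|S_n|$, a ratio (delta-method) argument converts this into a standardized fluctuation of order $(|S_n|h_n)^{-1/2}$, which combined with the $O(h_n)$ smoothing bias gives the two kernel rates. The hard part will be exactly this variance bookkeeping for the pair sums: isolating the leading near-diagonal term from the higher-order moment-measure contributions, controlling boundary effects through (C1), and verifying via (C2) and the integrability conditions (C4)--(C6) that every off-diagonal term is of strictly smaller order.
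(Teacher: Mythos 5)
Your proposal follows essentially the same route as the paper's proof: the same oracle-plus-plug-in decomposition of the residuals $Z(\bs)-\bw(\bs)^\top\tbbeta$, Campbell's theorem for the LGCP, a Gaussian tilting computation (your exponential-tilting identity is exactly the paper's Stein-type Lemma~1) producing the same cancellations of $C_{XY}(0)$ and the same limiting ratios of $\rho_2$-weighted integrals, the same near-diagonal variance bound $\iint\rho_2(\bs,\bt)K_{h}(\|\bs-\bt\|-r)^2\rd\bs\rd\bt=O(|S_n|/h_n)$, and the same ratio-plus-Taylor argument for the $O(h_n)$ smoothing bias. The only substantive deviation is that you dispose of the $\tbbeta$-dependent terms with a crude product-rule bound $O_p(|S_n|^{-1/2})$, whereas the paper computes their means and variances explicitly to obtain the sharper rates $O_p[(|S_n|^2h_n)^{-1/2}]$ and $O_p[(|S_n|^3h_n)^{-1/2}]$; both suffice because $h_n\to 0$ makes these terms negligible relative to $(|S_n|h_n)^{-1/2}$.
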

\begin{proof}
    The proof is provided in the Supplementary Material.
\end{proof}

Theorem~\ref{theorem:2} reveals a surprising result: even in the presence of preferential sampling, the sill $\omega$, the semi-variogram function $V_Y(r)$ and the cross-covariance function $C_{XY}(r)$ can still be consistently estimated using classical moment-based estimator (\ref{e:4}) and kernel-based estimators (\ref{e:5}) and (\ref{e:6}), without having to make parametric assumptions on $X(\bs)$ nor the preferential sampling mechanism. Therefore, we can simply plug $\tomega$, $\tC_Y(r)=\tomega-\tV_Y(r)$ and $\tC_{XY}$ back into the asymptotic covariance matrix derived in Theorem~\ref{theorem:1} to enable the inference for $\tbbeta$. In the simulation study in Section~\ref{sec:5_2}, we demonstrate that this approach yields valid confidence intervals.

\section{Unbiased Estimation of a Parametric Spatial Covariance Function}
\label{sec:4} 

Although the nonparametric estimator $\tV_Y(r)$ consistently estimates the semi-variogram $V_Y(r)$ under the preferential sampling framework described in Section~\ref{sec:2}, it is often desirable in geostatistics to fit a parametric model for $V_Y(r)$, and equivalently for the spatial covariance $C_Y(r)$. For instance, one may assume $V_Y(r)=\zeta(r;\btheta)$ with $\zeta(r;\btheta)=\sigma_e^2+\sigma_Y^2[1-\exp(-r/\phi_Y)]$, which corresponds to a special case of the Mat\'{e}rn class (\ref{e:2}) with $\nu=0.5$ and $\btheta=(\sigma_Y^2,\phi_Y,\sigma_e^2)$.

To estimate $\btheta$, we propose to minimize the weighted minimum contrast objective function:
\begin{equation}
\int_{S}\int_{S}\rw(\bs,\bt)\rho_2(\bs,\bt)\left[V_Y(\|\bs-\bt||)-\zeta(\|\bs-\bt\|;\btheta)\right]^2\rd\bs\rd\bt,
    \label{e:7}
\end{equation}
where $\rw(\bs,\bt)$ is a nonnegative weight function. It is straightforward to show that, under preferential sampling, minimizing (\ref{e:7}) is equivalent to minimizing
\begin{equation*}
\E\left[\mathop{\sum\sum}^{\neq}_{\bs,\bt\in N}\rw(\bs,\bt)\left\{\frac{1}{2}\left[Z^*(\bs)-Z^*(\bt)\right]^2-\zeta(\|\bs-\bt\|;\btheta)\right\}^2\right],
\end{equation*}
where ${Z}^*(\bs)=Z(\bs)-\bw(\bs)^\top\bbeta_0^*$ with $\bbeta_0^*$ defined in Theorem~\ref{theorem:1}. Note that, by Theorem~\ref{theorem:1}, $\tbbeta$ is a consistent estimator for $\bbeta_0^*$. We therefore propose to minimize the following function 
\begin{equation}
    Q_{MC}(\btheta)=\mathop{\sum\sum}^{\neq}_{\bs,\bt\in N}\rw(\bs,\bt)\left\{\left[\hat{Z}(\bs)-\hat{Z}(\bt)\right]^2-2\zeta(\|\bs-\bt\|;\btheta)\right\}^2,
    \label{e:8}
\end{equation}
where $\hat{Z}(\bs)=Z(\bs)-\bw(\bs)^\top\tbbeta$. Assuming that $\zeta(\|\bs-\bt\|;\btheta)$ is differentiable with respect to $\btheta$ and denoting the partial derivative by $\bzeta^{(1)}(\|\bs-\bt\|;\btheta)$, minimizing (\ref{e:8}) reduces to solving the estimating equation
\begin{equation}
    \bQ_{MC}^{(1)}(\btheta)=\mathop{\sum\sum}^{\neq}_{\bs,\bt\in N}\rw(\bs,\bt)\bzeta^{(1)}(\|\bs-\bt\|;\btheta)\left\{\left[\hat{Z}(\bs)-\hat{Z}(\bt)\right]^2-2\zeta(\|\bs-\bt\|;\btheta)\right\}=\bm 0.
    \label{e:9}
\end{equation}

Alternatively, note that, for any pair of distinct locations $(\bs,\bt)$, the difference $[Z^*(\bs)-Z^*(\bt)]\sim \mathrm{N}\left[0,2\zeta(\|\bs-\bt\|;\btheta)\right]$. Hence, we can also minimize a weighted negative composite likelihood objective function:
\begin{equation}
    Q_{CL}(\btheta)=\mathop{\sum\sum}^{\neq}_{\bs,\bt\in N}
    \rw(\bs,\bt)\left\{\frac{\left[\hat{Z}(\bs)-\hat{Z}(\bt)\right]^2}{2\zeta(\|\bs-\bt\|;\btheta)}+\log\left[\zeta(\|\bs-\bt\|;\btheta)\right]\right\},
    \label{e:10}
\end{equation}
which is equivalent to to solving the estimating equation 
\begin{equation}
    \bQ_{CL}^{(1)}(\btheta)=\mathop{\sum\sum}^{\neq}_{\bs,\bt\in N}\rw(\bs,\bt)\frac{\bzeta^{(1)}(\|\bs-\bt\|;\btheta)}{\zeta(\|\bs-\bt\|;\btheta)^2}\left\{\left[\hat{Z}(\bs)-\hat{Z}(\bt)\right]^2-2\zeta(\|\bs-\bt\|;\btheta)\right\}=\bm 0.
    \label{e:11}
\end{equation}
A good choice of $\rw(\bs,\bt)$ can improve the efficiency of the resulting estimators. In our simulation study, we adopt the following weight:
\begin{equation*}
    \rw(\bs,\bt)=\frac{1(\|\bs-\bt\|\leq R)}{2\pi \|\bs-\bt\| |S\cap(S-\bs+\bt)|},
\end{equation*}
where $1(\cdot)$ is the indicator function, $|S\cap(S-\bs+\bt)|$ is the overlap area between $S$ and its translation by $\bs-\bt$, and $R$ is a predefined constant of spatial dependence range. 

\begin{theorem}
\label{theorem:3}
    Let $\btheta_0$, $\tbtheta_{n,MC}$ and $\tbtheta_{n,CL}$ denote the true parameters in $\zeta(\|\bs-\bt\|;\btheta)$ and the estimators by minimizing (\ref{e:8}) and (\ref{e:10}) on $S_n$. Under conditions (C1)--(C6) and (C8)--(C10), as $n\to\infty$, it holds that $\|\tbtheta_{n,MC}-\btheta_0\|=O_p(|S_n|^{-1/2})$ and $\|\tbtheta_{n,CL}-\btheta_0\|=O_p(|S_n|^{-1/2})$.
\end{theorem}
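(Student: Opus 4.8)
The plan is to treat both $\tbtheta_{n,MC}$ and $\tbtheta_{n,CL}$ as Z-estimators defined through the estimating equations (\ref{e:9}) and (\ref{e:11}), and to obtain the rate by the usual mean-value expansion. Writing $\bm 0 = \bQ^{(1)}(\tbtheta_n) = \bQ^{(1)}(\btheta_0) + \bQ^{(2)}(\bar\btheta_n)(\tbtheta_n - \btheta_0)$ for an intermediate point $\bar\btheta_n$ on the segment between $\tbtheta_n$ and $\btheta_0$, we get $\tbtheta_n - \btheta_0 = -[\bQ^{(2)}(\bar\btheta_n)]^{-1}\bQ^{(1)}(\btheta_0)$, so it suffices to show (i) the score at the truth satisfies $\bQ^{(1)}(\btheta_0) = O_p(|S_n|^{-1/2})$ and (ii) the Hessian $\bQ^{(2)}(\bar\btheta_n)$ converges in probability to a deterministic nonsingular matrix. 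I would carry out the minimum-contrast case in full; the composite-likelihood case is identical once the weight $\rw\,\bzeta^{(1)}$ is replaced by $\rw\,\bzeta^{(1)}/\zeta^2$, which remains bounded and smooth by (C8)--(C9).

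First I would pass from the fitted residuals $\hat Z$ to the oracle residuals $Z^*(\bs) = Z(\bs) - \bw(\bs)^\top\bbeta^*_0$ and bound the plug-in error. Because the intercept is the first coordinate of $\bw$, it cancels in every difference, giving $\hat Z(\bs) - \hat Z(\bt) = [Z^*(\bs) - Z^*(\bt)] - [\bw(\bs) - \bw(\bt)]^\top(\tbbeta_n - \bbeta^*_0)$ with the $C_{XY}(0)$ contained in $\bbeta^*_0$ playing no role. Squaring and subtracting $2\zeta$ produces, beyond the oracle summand, a term linear and a term quadratic in $\tbbeta_n - \bbeta^*_0$. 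By Theorem~\ref{theorem:1}, $\tbbeta_n - \bbeta^*_0 = O_p(|S_n|^{-1/2})$; using (C3) to bound $\bw$ and the conditional-moment identity established in the next step, the weighted double sum multiplying the linear term has conditional mean zero, hence is $O_p(|S_n|^{-1/2})$, while the one multiplying the quadratic term is $O_p(1)$. The oracle and fitted scores therefore differ by $O_p(|S_n|^{-1})$, which is negligible at the target rate.

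Next I would show that the oracle score is exactly unbiased at $\btheta_0$ and fluctuates at rate $|S_n|^{-1/2}$. The Campbell--Mecke formula rewrites $\E$ of the double sum as an integral against $\rho_2(\bs,\bt)$ of the reduced-Palm moment $\E\{[Z^*(\bs) - Z^*(\bt)]^2 \mid \bs,\bt\in N\}$. For the LGCP, conditioning on points at $\bs$ and $\bt$ shifts the mean of the latent Gaussian field but leaves its covariance intact; the induced shifts of $Y(\bs)$ and $Y(\bt)$ both equal $C_{XY}(0) + C_{XY}(\|\bs-\bt\|)$ and so cancel in the difference. Consequently $\E[Z^*(\bs) - Z^*(\bt) \mid \bs,\bt\in N] = 0$ and $\E\{[Z^*(\bs) - Z^*(\bt)]^2 \mid \bs,\bt\in N\} = 2\zeta(\|\bs-\bt\|;\btheta_0)$---precisely the identity underpinning Theorem~\ref{theorem:2}---so each summand of the oracle score is conditionally centered and $\E[\bQ^{(1)}(\btheta_0)] = \bm 0$. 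Given the $|S_n|$-order normalization built into $\rw$ (its denominator equals the overlap area, which is $\asymp |S_n|$ on $\|\bs-\bt\|\le R$), controlling $\Var[\bQ^{(1)}(\btheta_0)]$ reduces to summing covariances between pairs $(\bs,\bt)$ and $(\bu,\bv)$; these are bounded through the joint intensities of $N$ up to fourth order, the Gaussian moments of $(X,Y)$, and the polynomial mixing decay (C2) together with (C4)--(C6), yielding $\Var[\bQ^{(1)}(\btheta_0)] = O(|S_n|^{-1})$ and hence $\bQ^{(1)}(\btheta_0) = O_p(|S_n|^{-1/2})$.

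Finally I would identify the Hessian limit. Differentiating once more, the summand of $\bQ^{(2)}(\btheta)$ splits into a term proportional to $\bzeta^{(2)}\{[\hat Z(\bs)-\hat Z(\bt)]^2 - 2\zeta\}$, which is conditionally centered at $\btheta_0$ and vanishes in probability by the same covariance bookkeeping, and the term $-2\,\bzeta^{(1)}\bzeta^{(1)\top}$. After the Campbell--Mecke reduction the latter converges to the deterministic matrix $\bm M(\btheta_0) = -2\int\!\!\int \rw(\bs,\bt)\rho_2(\bs,\bt)\,\bzeta^{(1)}\bzeta^{(1)\top}\,\rd\bs\,\rd\bt$, which is continuous in $\btheta$ by (C8); its nonsingularity at $\btheta_0$ is exactly the identifiability guaranteed by the uniqueness condition (C10), and a uniform law of large numbers on a compact neighborhood extends the convergence to $\bar\btheta_n$. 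Substituting into the expansion gives $\|\tbtheta_{n,MC}-\btheta_0\| = O_p(|S_n|^{-1/2})$, and the same argument with weight $\rw\,\bzeta^{(1)}/\zeta^2$ yields the claim for $\tbtheta_{n,CL}$. I expect the main obstacle to be the variance bound for the double sum: since each summand is a pairwise functional of preferentially sampled Gaussian marks, the covariance between two pairs couples fourth-order Gaussian moments of $(X,Y)$ with second-, third-, and fourth-order intensities of $N$, and the decisive step is to track which of these cross-terms survive and to control their spatial decay through (C2) and (C6).
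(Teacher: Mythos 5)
Your proposal is correct and follows essentially the same route as the paper's proof: reduce to the oracle score in $Z^*(\bs)=Z(\bs)-\bw(\bs)^\top\bbeta_0^*$ using Theorem~\ref{theorem:1} (the paper handles the plug-in error by appealing to its Theorem~\ref{theorem:2} arguments), show the oracle estimating function is exactly unbiased at $\btheta_0$, bound its variance through the second- to fourth-order factorial intensities of the LGCP together with (C4)--(C6), and finish with a mean-value expansion against a convergent Hessian. The only cosmetic differences are that you obtain the key identity $\E\left\{\left[Z^*(\bs)-Z^*(\bt)\right]^2 \mid \bs,\bt\in N\right\}=2\zeta(\|\bs-\bt\|;\btheta_0)$ by Palm conditioning, whereas the paper computes $\E\left\{\left[Y(\bs)+e(\bs)-Y(\bt)-e(\bt)\right]^2\exp[X(\bs)+X(\bt)]\right\}$ directly via its Lemma~\ref{lemma:1} (Stein's lemma) and Isserlis' theorem --- equivalent formalizations of the same fact --- and that you tie the normalization to the specific overlap-area weight, while the paper divides the score by $|S_n|^2$ to cover any weight satisfying (C9).
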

\begin{proof}
    The proof is provided in the Supplementary Material.
\end{proof}

Theorem~\ref{theorem:3} establishes consistency of the proposed estimators for the parameters $\btheta$ under preferential sampling. This result holds because, when the sampling locations follow an LGCP and the spatial process $Y(\bs)$ is Gaussian, the two estimating equations (\ref{e:9}) and (\ref{e:11}) remain unbiased, regardless of the strength of cross-correlation, by arguments in analogy to those for Theorem~\ref{theorem:2}.

\section{Simulation Study}
\label{sec:5}

To evaluate the proposed approaches, we conduct simulations under two preferential sampling scenarios. Experiments are performed on $S=[0,n] \times [0,n]$ with covariates $\bw(\bs)=[1, w(\bs)]$, where $w(\bs)$ is a fixed realization from a zero-mean Gaussian process with covariance function $\exp(-10r)$. The model is given by
$
    Z(\bs)=\beta_0+\beta_1 w(\bs)+Y(\bs)+e(\bs)
$
with $\beta_0=\beta_1=1$ and $\sigma_e^2=0.1$. For comparison, we include the standard MLE and the adapted likelihood-based method implemented using the template model builder (TMB) \citep{dinsdale2019tmb}. Estimation performance is measured via bias, standard error (SdErr) and root mean squared error (RMSE). 

In scenario 1, we adopt the same model as in \citet{diggle2010geostat}. We model $Y(\bs)$ by a stationary Gaussian process with mean zero and Mat\'{e}rn covariance $C_Y(r;\sigma_Y^2,\phi_Y,\nu_Y)$. The sampling locations are generated from $N$ with latent intensity $\lambda(\bs)=\exp[\gamma_0+X(\bs)]$ and $X(\bs)=\gamma Y(\bs)$, where $\gamma_0$ is chosen to ensure that the expected number of observations per unit square is 400. This implies $C_{XY}(r)=\gamma C_Y(r;\sigma_Y^2,\phi_Y,\nu_Y)$. We fix $\sigma_Y^2=1$ and $\gamma=1$, and vary $\phi_Y$ and $\nu_Y$.
In scenario 2, we consider a more general setup where $X(\bs)$ and $Y(\bs)$ are stationary Gaussian processes with means zero, marginal covariances $C_{X}(r;\sigma_{X}^2,\phi_X,\nu_X)$ and $C_{Y}(r;\sigma_{Y}^2,\phi_Y,\nu_Y)$, and cross-covariance $C_{XY}(r;\sigma_{XY}^2,\phi_{XY},\nu_{XY})$. 

\begin{table}[t]
    \caption{Parameter estimates and computation time on simulations in Section~\ref{sec:5_1}.}
    \label{tab:1}
    \centering
    \scriptsize
    \begin{tabular}{|l|ccccccc|}
    \hline
         & \textbf{Method} & $\beta_0$ & $\beta_1$ & $\sigma_Y^2$ & $\phi_Y$ & $\sigma_e^2$ & Time (sec)\\
    \hline
        & True & 1 & 1 & 1 & 0.05 & 0.1 & -\\
        \cline{2-8}
        \multirow{4}{*}{\shortstack{\textbf{Scenario 1}\\ ($\phi$=0.05)}} & MLE & 1.39(0.16) & 0.98(0.06) & 0.83(0.13) & 0.0440(0.0069) & 0.101(0.023) & 4.7(0.6)\\
        & TMB & 0.96(0.16) & 0.99(0.04) & 0.98(0.17) & 0.0524(0.0073) & 0.101(0.023) & 727.1(125.3)\\
        & CL & 1.06(0.26) & 0.98(0.15) & 1.01(0.26) & 0.0495(0.0121) & 0.102(0.035) & 1.3(0.2)\\
        & MC & 1.06(0.26) & 0.98(0.15) & 1.03(0.30) & 0.0523(0.0151) & 0.103(0.047) & 1.5(0.4)\\
    \hline
        & True & 1 & 1 & 1 & 0.1 & 0.1 & -\\
        \cline{2-8}
        \multirow{4}{*}{\shortstack{\textbf{Scenario 1}\\ ($\phi$=0.1)}} & MLE & 1.19(0.28) & 1.00(0.04) & 0.84(0.20) & 0.0911(0.0172) & 0.099(0.013) & 4.5(0.8)\\
        & TMB & 0.91(0.29) & 1.00(0.04) & 0.92(0.23) & 0.0981(0.0163) & 0.098(0.013) & 775.6(89.6)\\
        & CL & 1.09(0.41) & 1.00(0.16) & 0.97(0.37) & 0.0989(0.0334) & 0.102(0.021) & 2.0(0.3)\\
        & MC & 1.09(0.41) & 1.00(0.16) & 0.97(0.40) & 0.0990(0.0416) & 0.096(0.036) & 1.9(0.4)\\
    \hline
        & True & 1 & 1 & 1 & 0.1 & 0.1 & -\\
        \cline{2-8}
        \multirow{4}{*}{\textbf{Scenario 2}}
        & MLE & 1.39(0.25) & 1.00(0.04) & 0.77(0.20) & 0.0923(0.0177) & 0.101(0.012) & 4.5(0.7)\\
        & TMB & 1.07(0.25) & 1.00(0.06) & 0.80(0.18) & 0.0784(0.0137) & 0.089(0.011) & 1142.3(143.9)\\
        & CL & 1.21(0.29) & 1.01(0.15) & 0.92(0.38) & 0.1006(0.0298) & 0.101(0.018) & 2.1(0.5)\\
        & MC & 1.21(0.29) & 1.01(0.15) & 0.92(0.41) & 0.1030(0.0382) & 0.098(0.032) & 1.9(0.4)\\
    \hline
    \end{tabular}
\end{table}

\subsection{Comparison of Estimation Results}
\label{sec:5_1}

To compare the performance of different approaches, we run 100 simulations per scenario on $S=[0,1]\times [0,1]$. In scenario 1, we fix $\nu_Y=1$ and vary $\phi_Y$ between $0.05$ and $0.1$. In scenario 2, we set $\sigma_{X}^2=1.8, \sigma_{XY}^2=1$, $\sigma_{Y}^2=1$ with $\nu_X=0.5, \nu_{XY}=0.75, \nu_Y=1$ and $\phi_X=0.05, \phi_{XY}=0.07, \phi_Y=0.1$. We choose $R=4\phi_Y$ and report the estimates for the regression and covariance parameters across the 100 runs, along with the computation time. The results are summarized in Table~\ref{tab:1}. Our proposed method consistently yields approximately unbiased parameter estimates in all settings. In contrast, MLE exhibits large bias due to its disregard of preferential sampling. TMB performs reasonably well in scenario 1, where the model is correctly specified, but its performance deteriorates notably in scenario 2, where the model is misspecified. Importantly, our method remains effective under both forms of cross-correlation, demonstrating superior robustness. It is also computationally efficient, specifically, several times faster than MLE and up to thousands of times faster than TMB. Among the parameters, $\beta_1$ and $\sigma_e^2$ are estimated with the highest accuracy. Our method exhibits slightly higher variance compared to MLE and TMB, which is not surprising because the latter two approaches are both likelihood-based methods. Between the minimum contrast (MC) and composite likelihood (CL) estimators, the latter has smaller variances than the former. 

\begin{figure}[t]
\centering
\includegraphics[width=0.32\textwidth]{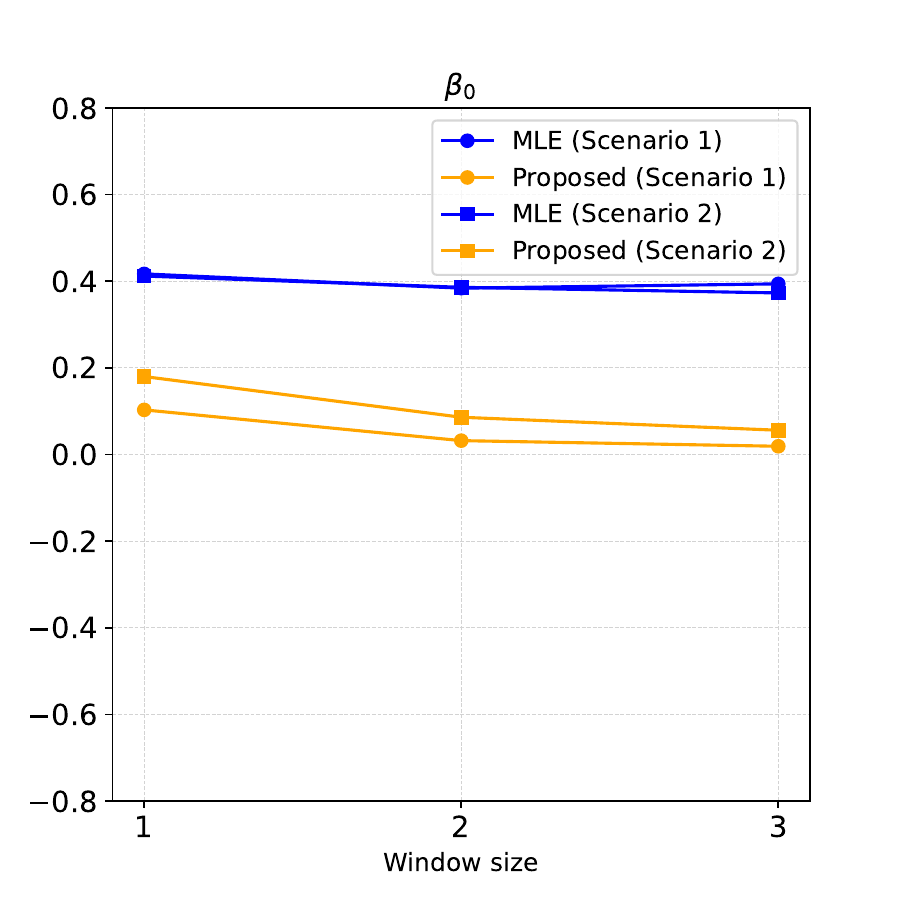}
\includegraphics[width=0.32\textwidth]{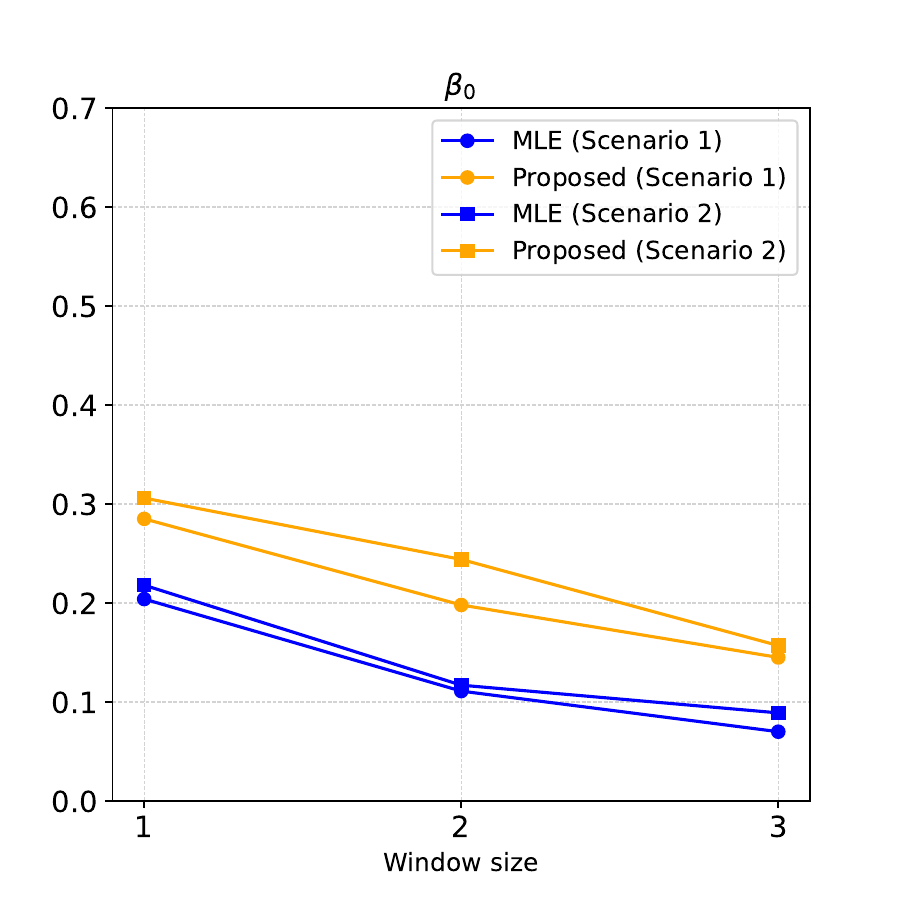}
\includegraphics[width=0.32\textwidth]{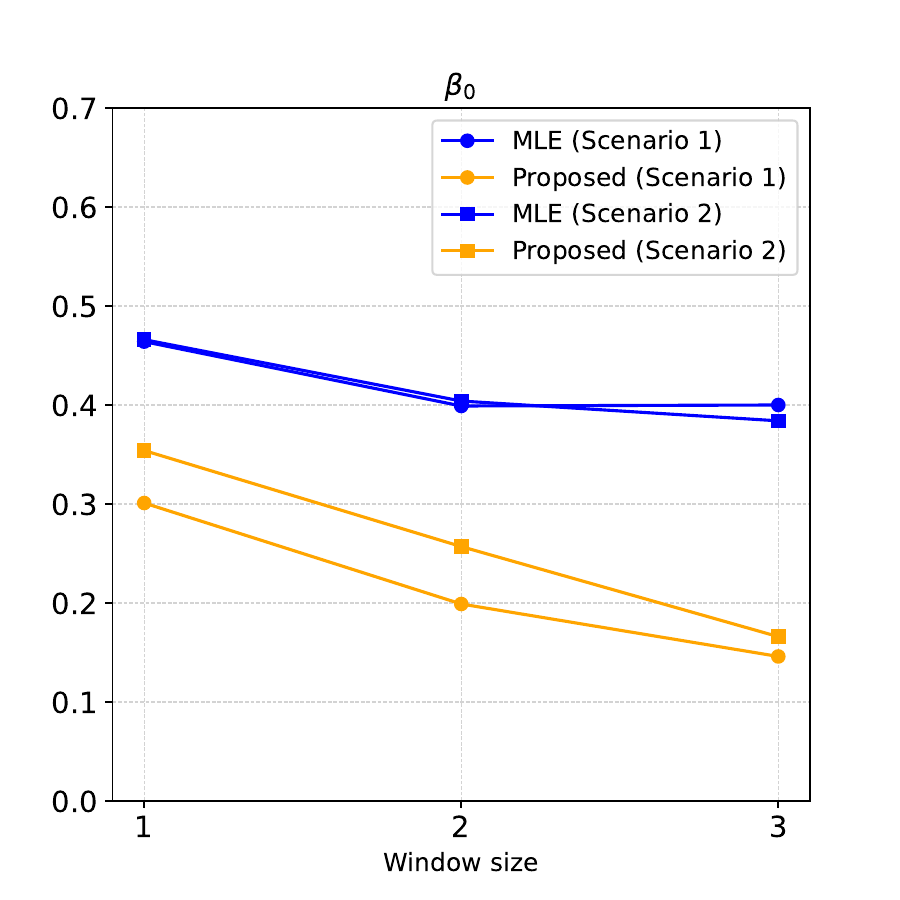}
\includegraphics[width=0.32\textwidth]{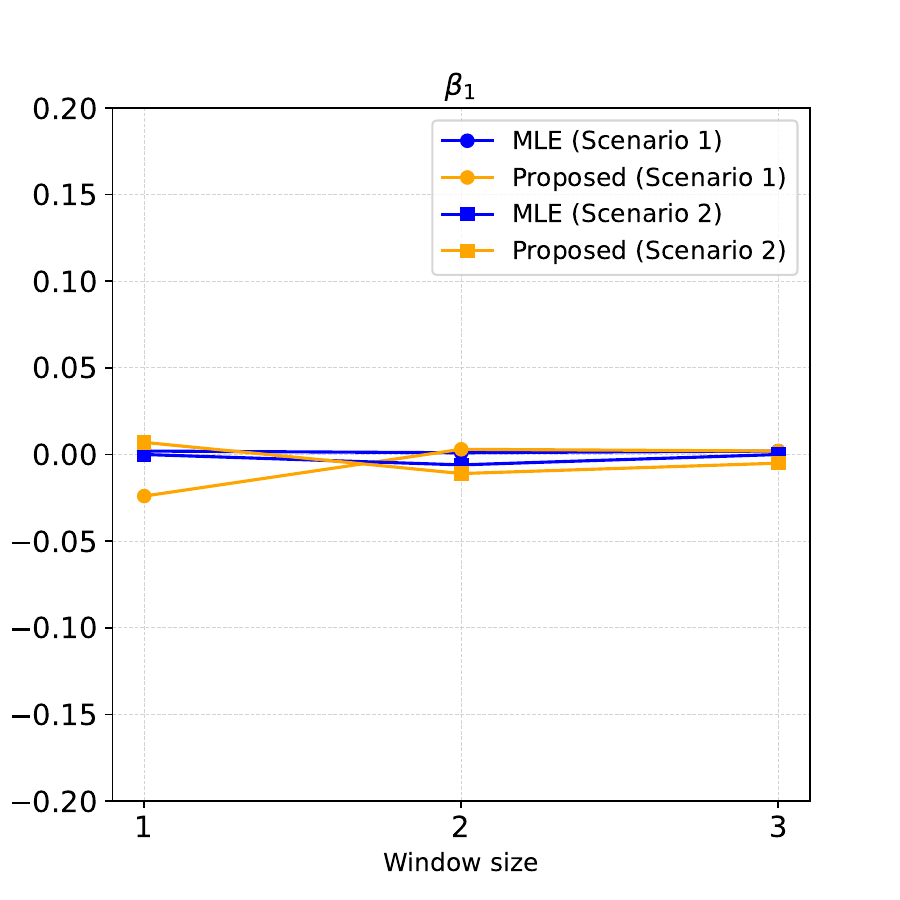}
\includegraphics[width=0.32\textwidth]{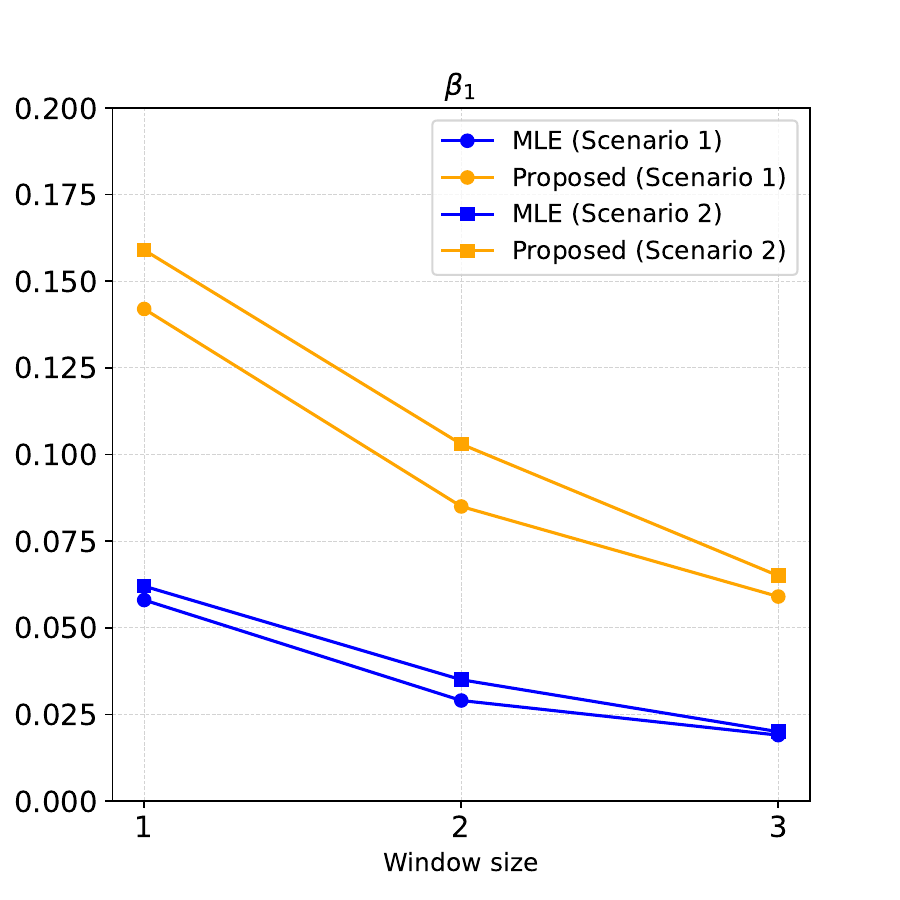}
\includegraphics[width=0.32\textwidth]{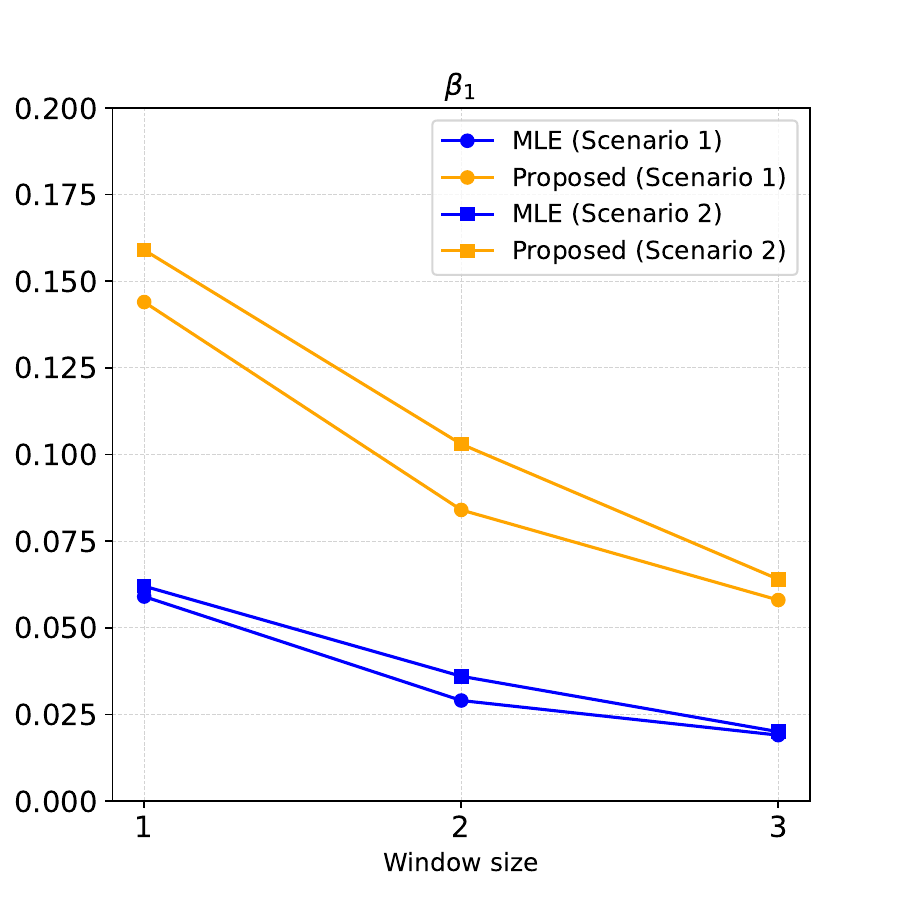}
\caption{Bias (left), SdErr (middle) and RMSE (right) of the estimated regression coefficients on simulations in Section~\ref{sec:5_2}.}\label{fig:1}
\end{figure}

\begin{figure}[t]
\centering
\includegraphics[width=0.32\textwidth]{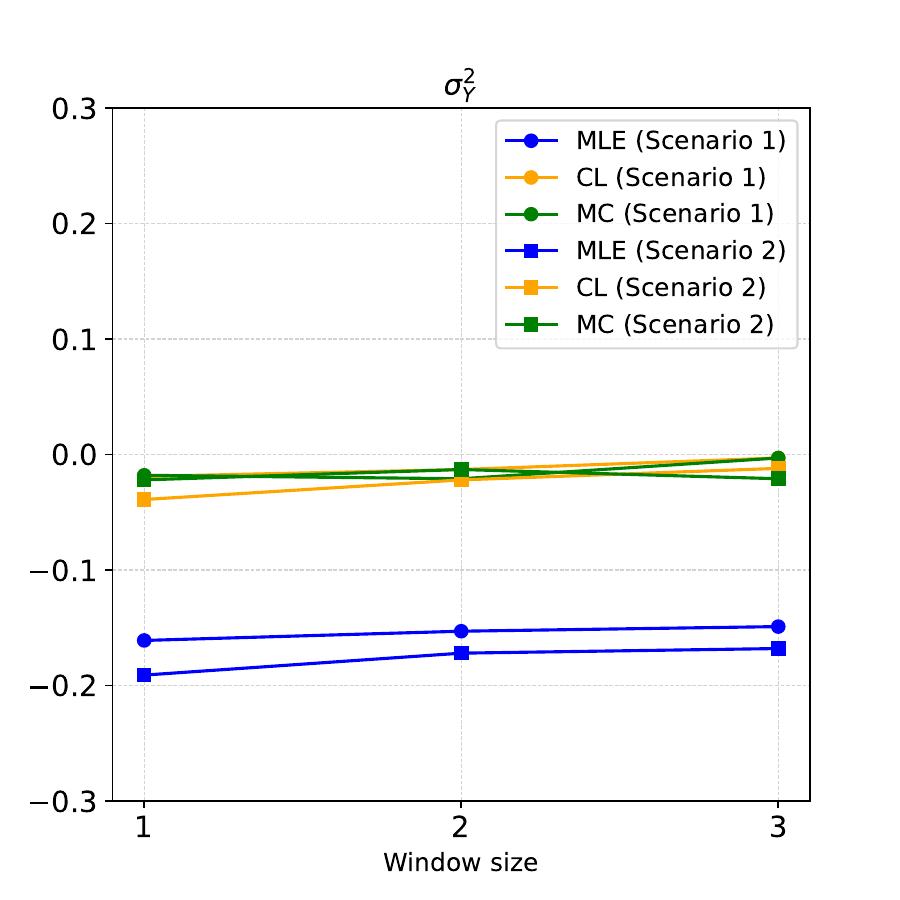}
\includegraphics[width=0.32\textwidth]{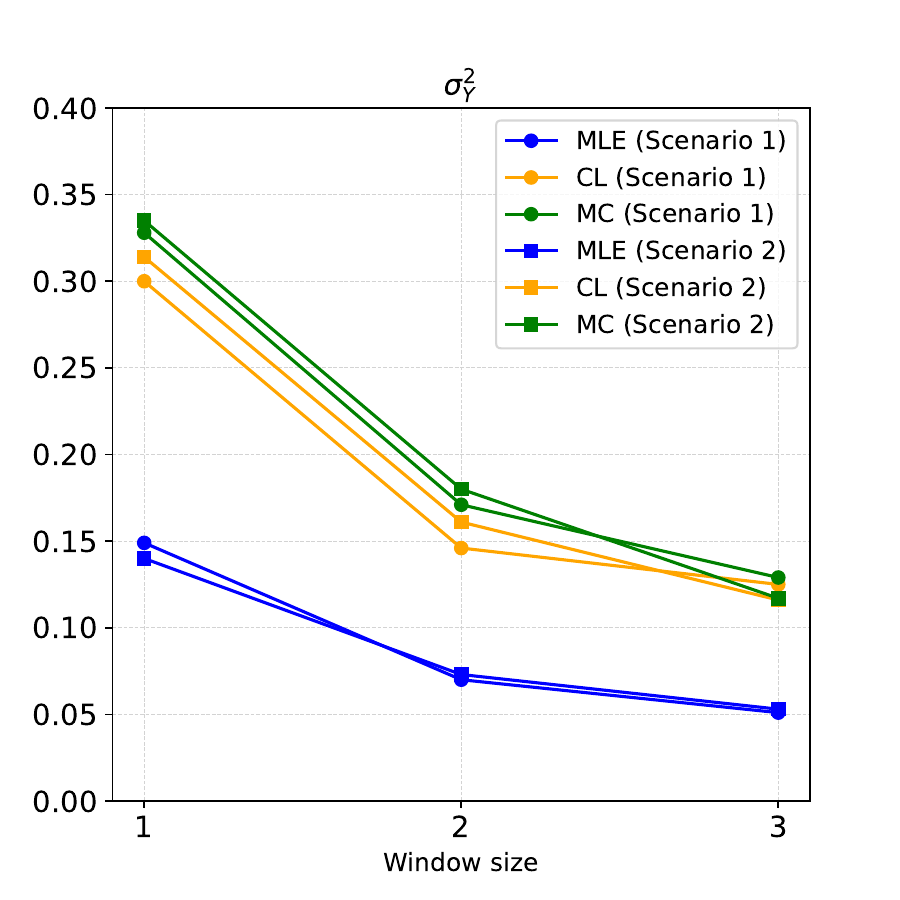}
\includegraphics[width=0.32\textwidth]{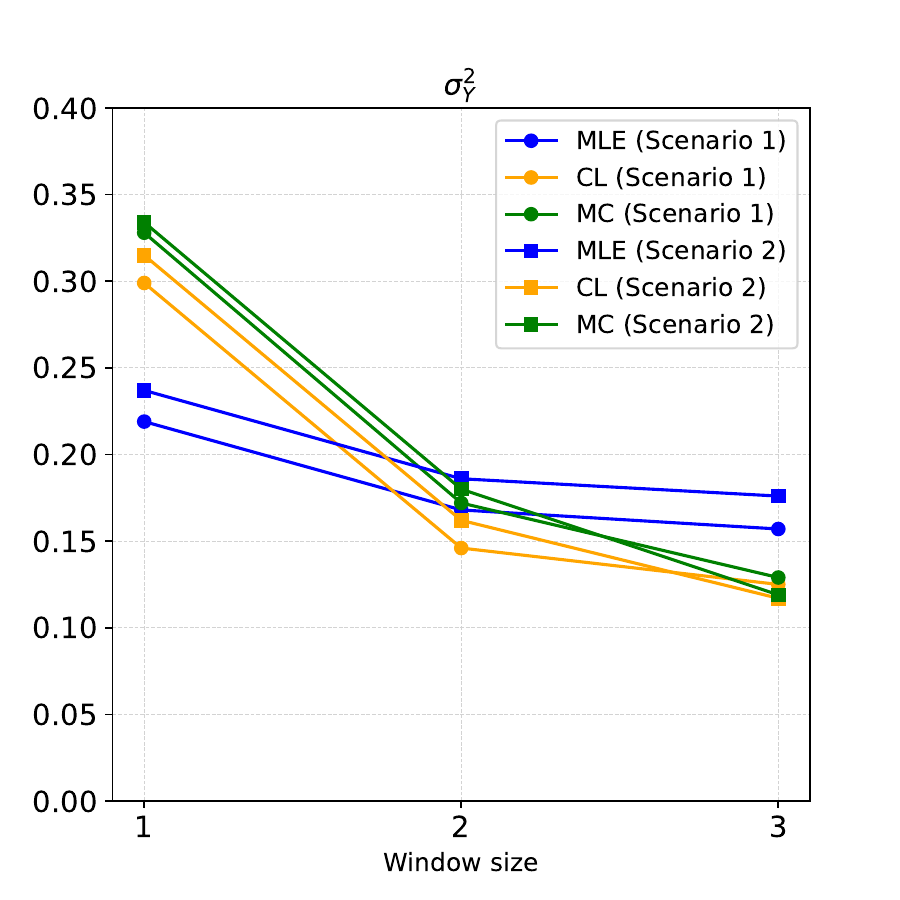}
\includegraphics[width=0.32\textwidth]{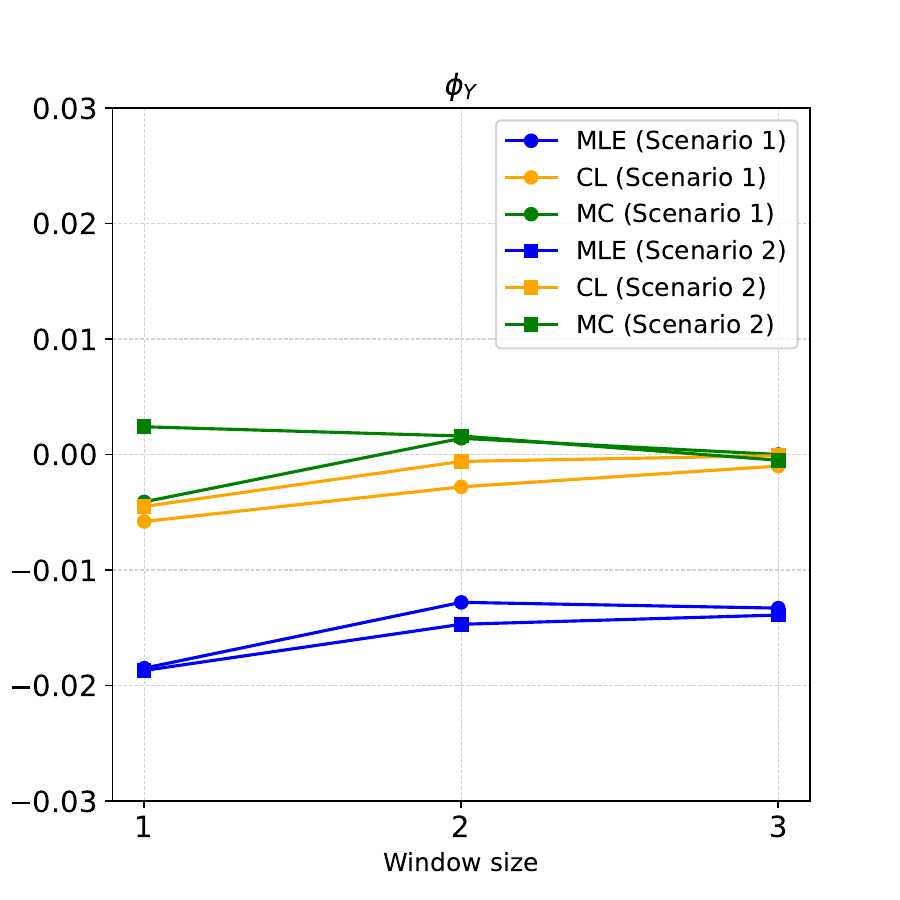}
\includegraphics[width=0.32\textwidth]{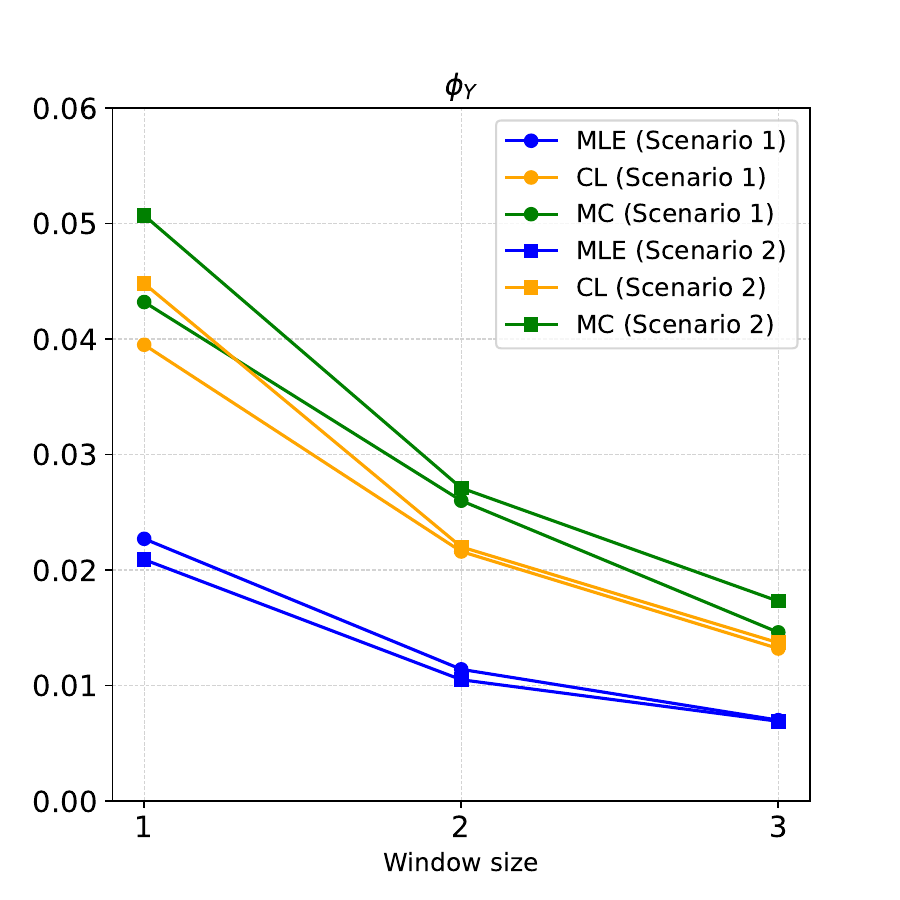}
\includegraphics[width=0.32\textwidth]{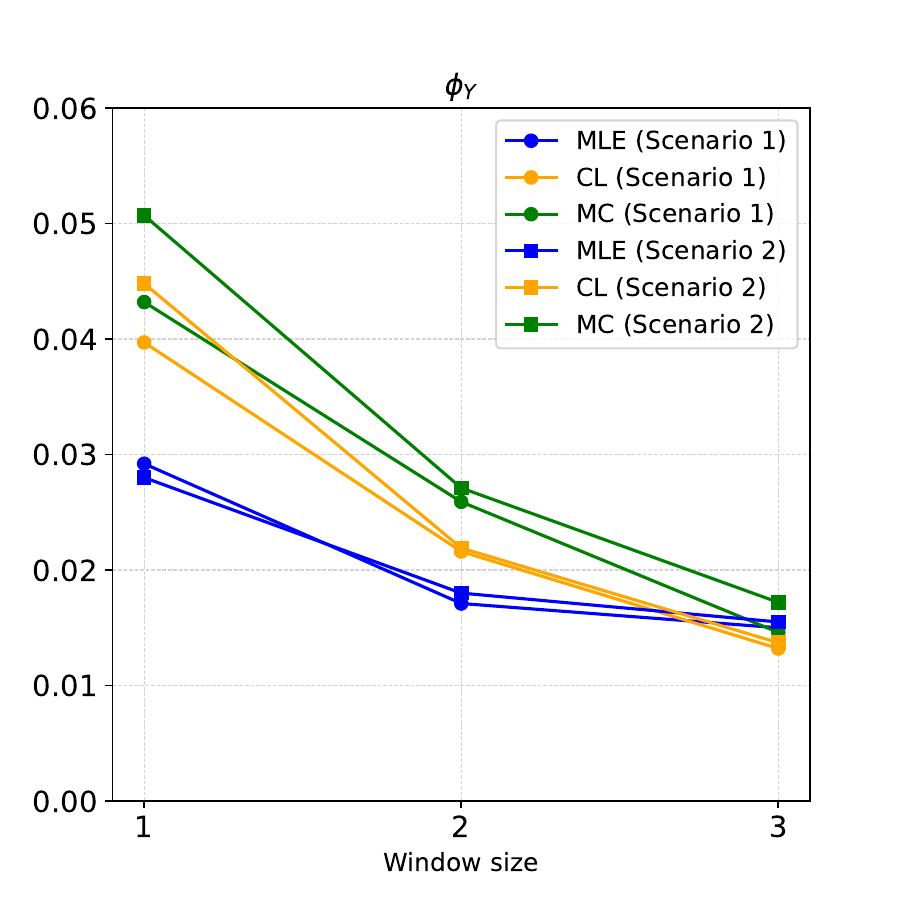}
\caption{Bias (left), SdErr (middle) and RMSE (right) of the estimated covariance parameters on simulations in Section~\ref{sec:5_2}.}\label{fig:2}
\end{figure}

\subsection{Asymptotic Analysis}
\label{sec:5_2}

We assess the asymptotic behaviour of our proposed estimators by expanding $S$ from $[0,1]\times [0,1]$ to $[0,3]\times [0,3]$, while maintaining the point density of 400 per unit square. In scenario 1, we fix $\nu_Y=0.5$ and $\phi_Y=0.1$. In scenario 2, we set $\sigma_{X}^2=1.2, \sigma_{XY}^2=1$ and $\sigma_{Y}^2=1$ with $\nu_X=1, \nu_{XY}=0.75, \nu_Y=0.5$ and $\phi_X=0.05, \phi_{XY}=0.07, \phi_Y=0.1$. We compare our results only with MLE, since TMB is too computationally intensive on $S=[0,3]\times [0,3]$. We test the bias-variance trade-off of MLE and our method using RMSE, with Bias and SdErr also plotted in Figures~\ref{fig:1}, \ref{fig:2} and \ref{fig:3}. The computation time is displayed in Figure~\ref{fig:4} (left). As $S$ expands, the standard errors decrease for both methods. For $\beta_0, \sigma_Y^2$ and $\phi_Y$, our method outperforms MLE in RMSE on $S=[0,2]\times [0,2]$ and $[0,3]\times [0,3]$, while being significantly faster. This advantage is expected to increase on larger data sets. For $\beta_1$ and $\sigma_e^2$, MLE achieves smaller RMSE due to lower estimation variances.

\begin{figure}[t]
\centering
\includegraphics[width=0.32\textwidth]{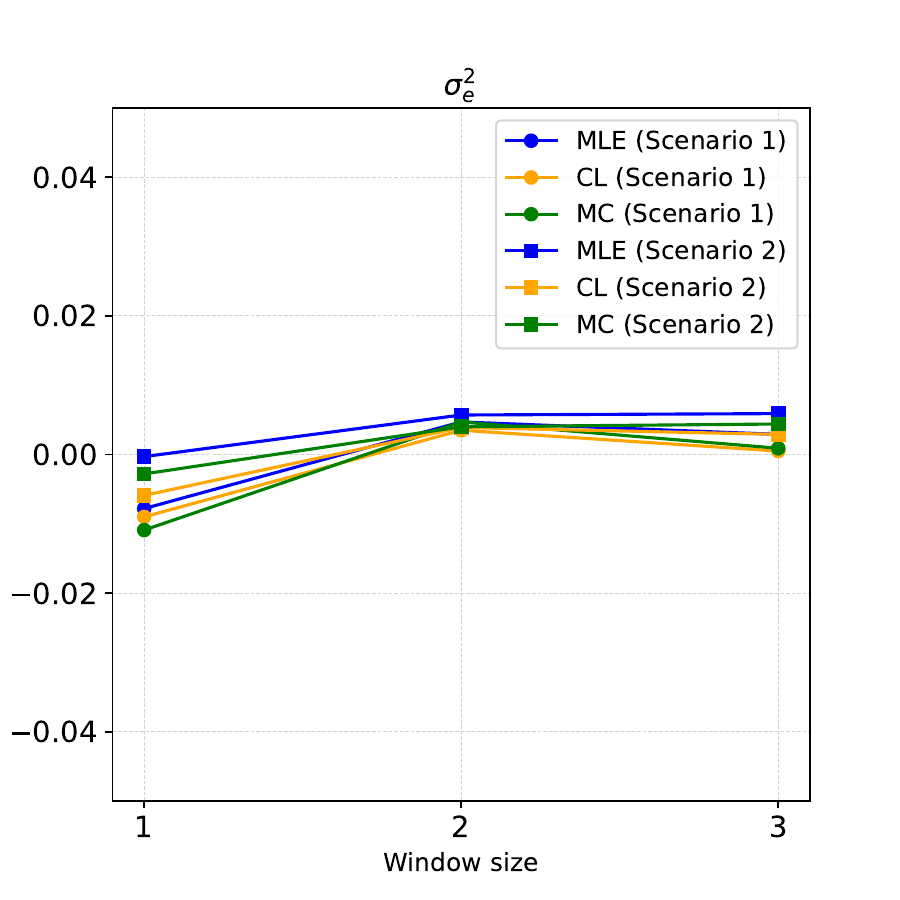}
\includegraphics[width=0.32\textwidth]{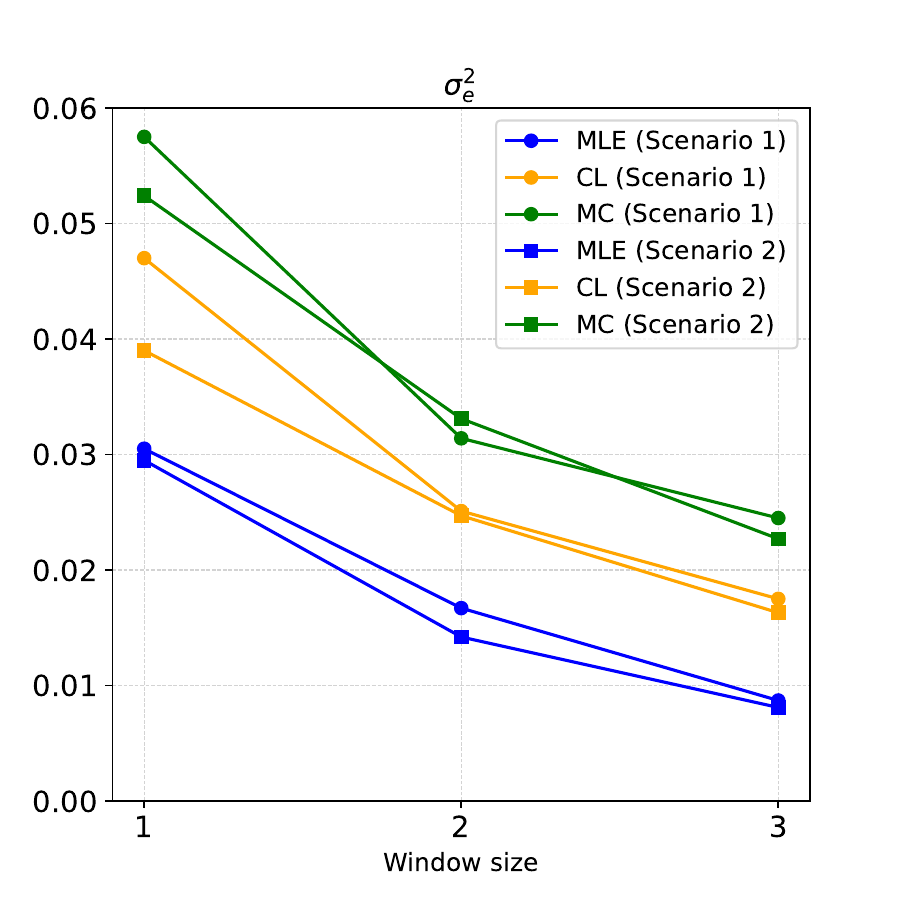}
\includegraphics[width=0.32\textwidth]{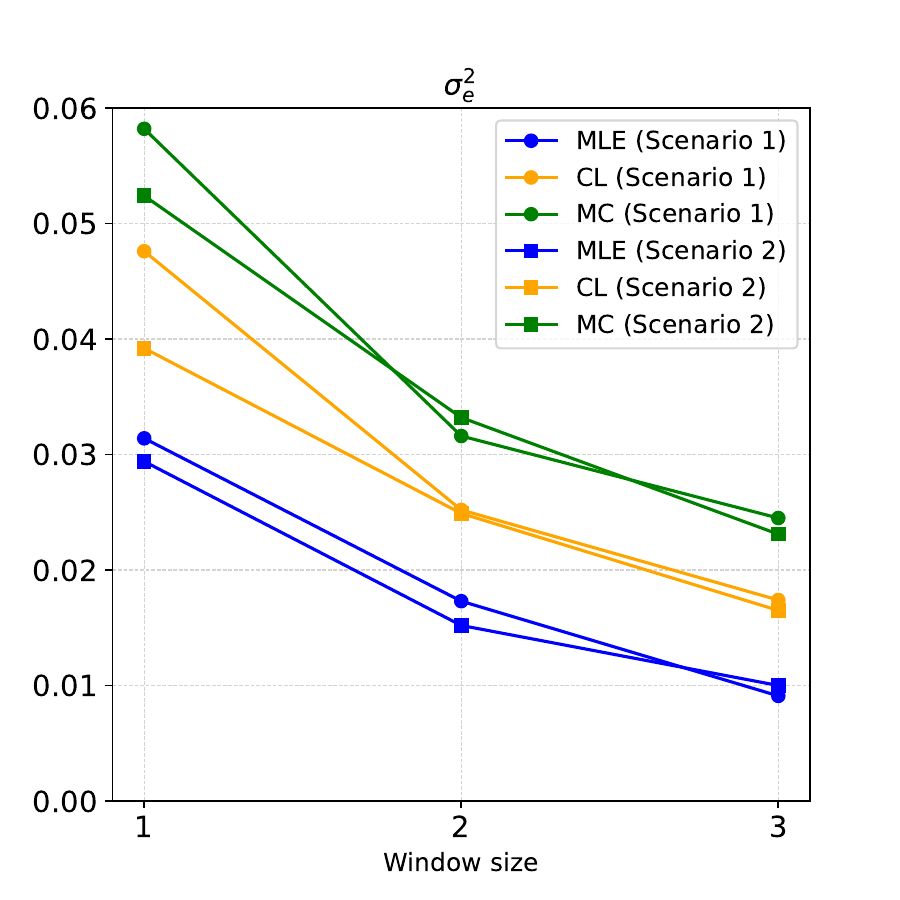}
\caption{Bias (left), SdErr (middle) and RMSE (right) of the estimated nugget effect on simulations in Section~\ref{sec:5_2}.}\label{fig:3}
\end{figure}

\begin{figure}[ht]
\centering
\includegraphics[width=0.32\textwidth]{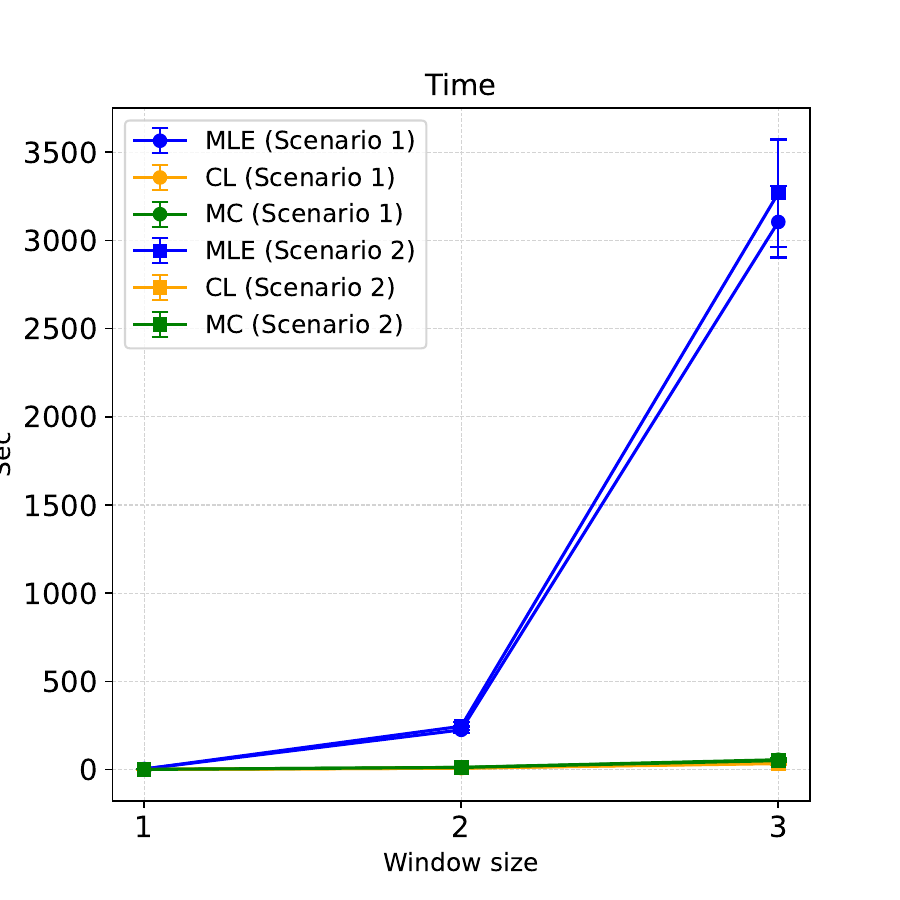}
\includegraphics[width=0.32\textwidth]{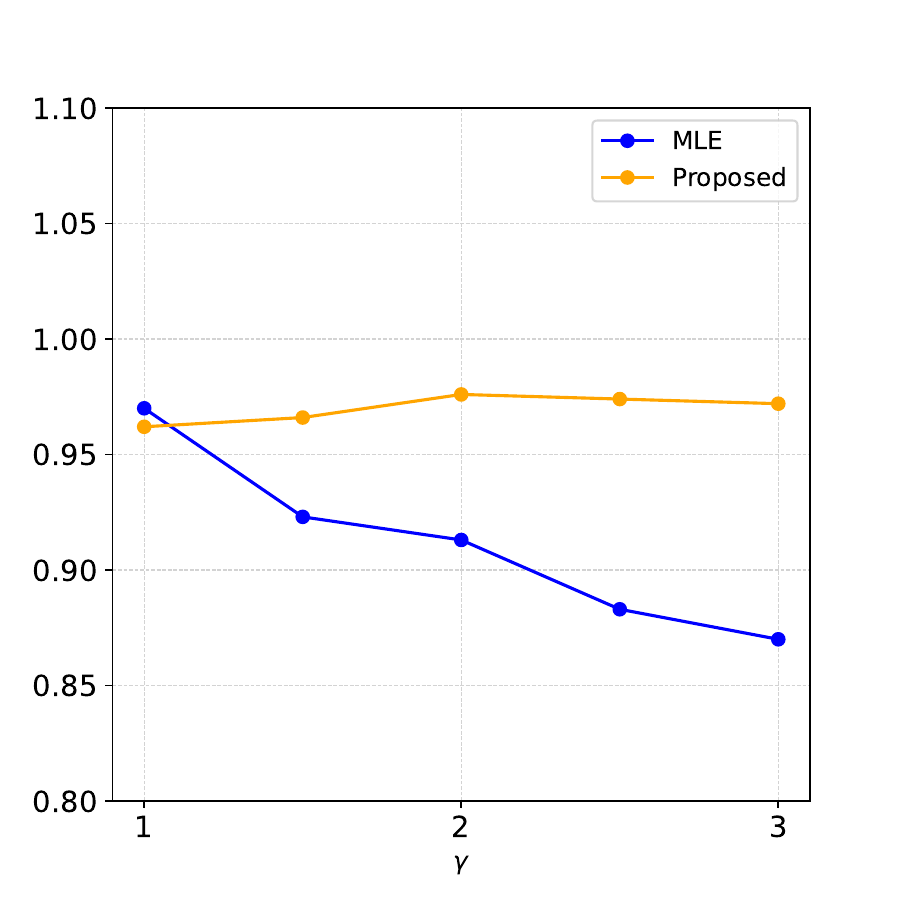}
\caption{Computation time (left) and confidence interval coverage for $\beta_1$ (right) on simulations in Section~\ref{sec:5_2}.}\label{fig:4}
\end{figure}

To examine the inference for the regression coefficients, we compute the 95\% confidence interval coverage of $\beta_1$ in scenario 1 on $S=[0,3]\times [0,3]$ by increasing the cross-correlation degree $\gamma$ from 1 to 3 with a step size of 0.5. The sill, covariance function and cross-covariance functions required for the asymptotic covariance matrix in Theorem~\ref{theorem:1} are estimated using the moment-based estimator (\ref{e:4}) and kernel-based estimators (\ref{e:5}) and (\ref{e:6}) with bandwidth selected via classical leave-one-out cross-validation. As shown in Figure~\ref{fig:4} (right), our method maintains coverage near 95\%, supporting the validity of Theorem~\ref{theorem:1} under preferential sampling. In contrast, the coverage obtained by MLE deviates far from 95\%. Note that its coverage of $\beta_0$ will be even worse because of estimation bias.

\section{An Application to Tropical Rainforest Data}
\label{sec:6}

To demonstrate the practical utility of the proposed method, we analyze data for `Trichilia tuberculata' trees, collected within a $200m \times 200m$ subregion of the 50 hectare forest dynamics plot on Barro Colorado Island in 1990. Tree diameters at breast height (DBH) are treated as marks. Figure~\ref{fig:5} (left) displays the tree locations and their DBH values. This data set was previously studied by \citet{myllymaki2009condition}, who modelled the tree locations as a stationary LGCP, with the latent Gaussian random field also governing the mark mean and covariance. Parameters were estimated using an empirical Bayesian approach: first fitting the point process and then analyzing the marks conditional on the fitted latent field. Their results suggest that trees in denser areas tend to have smaller diameters, as plotted in Figure~\ref{fig:5} (right).

\begin{figure}[t]
\centering
\includegraphics[width=0.42\textwidth]{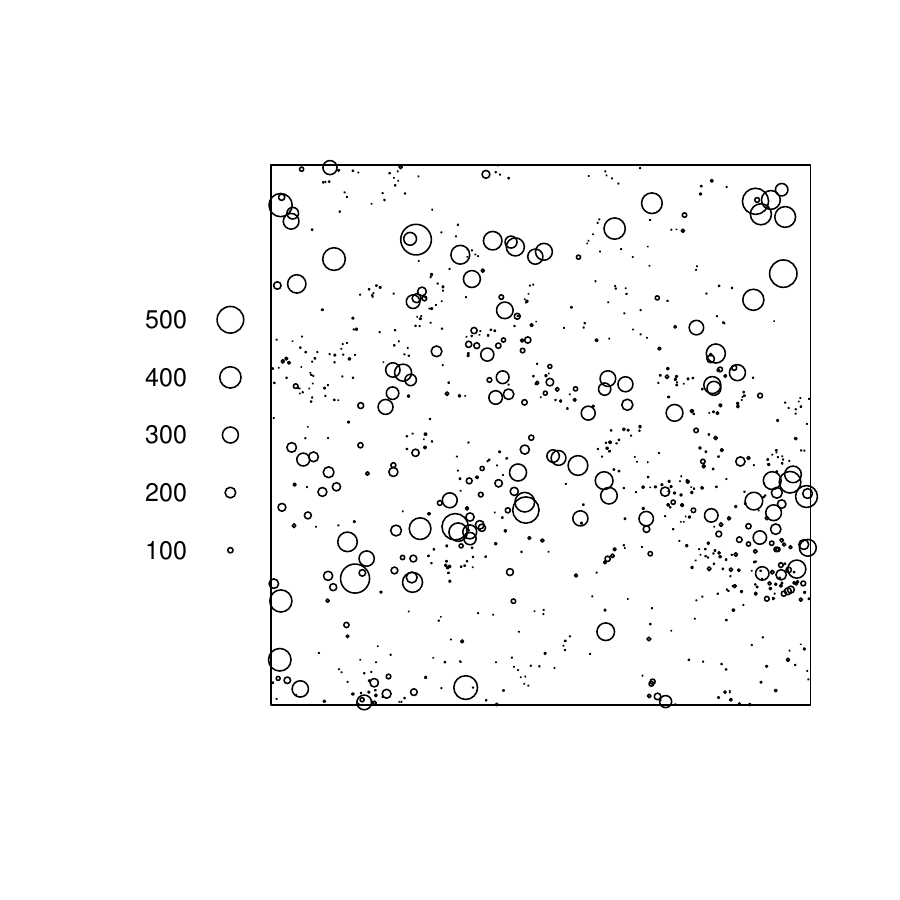}
\includegraphics[width=0.56\textwidth]{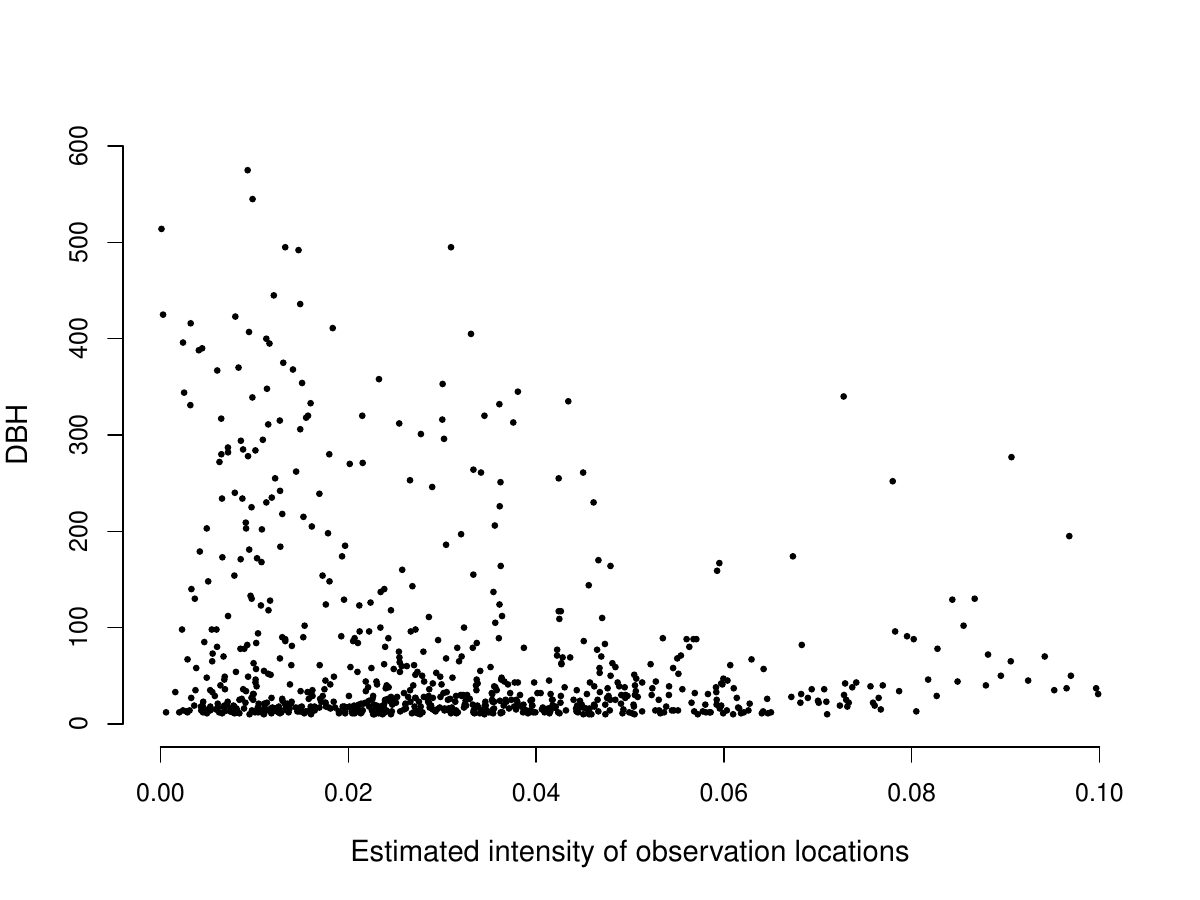}
\caption{Tree locations of `Trichilia tuberculata' (left), where circle size reflects the DBH, and scatter plot of DBH values against estimated local tree intensities (right).}\label{fig:5}
\end{figure}

We analyze the transformed marks $\log(\text{DBH}-9)$ and estimate its mean and covariance structures within a unified framework. Specifically, we include a spatial covariate, the square root of terrain slope, in the model (\ref{e:1}). Its spatial distribution is plotted in Figure~\ref{fig:6} (left). We assume the mark process has an exponential covariance function $C_Y(r)=\sigma_Y^2\exp(-r/\phi_Y)$ and choose $R=60$ in parameter estimation. The estimates for the regression coefficients, the covariance function and the nugget variance are reported in Table~\ref{tab:2}. For comparison, we apply MLE to the same data. Parameter estimates differ between MLE and our method, except for $\sigma_e^2$, as the former disregards preferential sampling. Our MC and CL estimators yield closely matching estimates for the covariance function. TMB meets computational issues on this data set, likely because its assumption $X(\bs)=\gamma Y(\bs)$ is violated. To investigate this, we estimate the covariance and cross-covariance functions using the estimators (\ref{e:4})--(\ref{e:6}) and plot them in Figure \ref{fig:6} (right). The dependence ranges of the two functions differ considerably, underscoring the importance of allowing flexible cross-correlation structures in geostatistical modelling. The negative values of the cross-covariance function again indicate that trees in areas of higher local intensity have smaller diameters, which is consistent with the findings in \citet{myllymaki2009condition}.

\begin{table}[t]
    \caption{Parameter estimates on tropical rainforest data.}
    \label{tab:2}
    \centering
    \scriptsize
    \begin{tabular}{|cccccc|}
    \hline
        Method & $\beta_0$ & $\beta_1$ & $\sigma_Y^2$ & $\phi_Y$ & $\sigma_e^2$\\[3pt]
    \hline
        MLE & 2.61 & 4.18 & 1.48 & 8.09 & 0.74\\
        CL & 2.84 & 3.32 & 1.31 & 11.28 & 0.78\\
        MC & 2.84 & 3.32 & 1.31 & 11.26 & 0.77\\
    \hline
    \end{tabular}
\end{table}

\begin{figure}[t]
\centering
\includegraphics[width=0.42\textwidth]{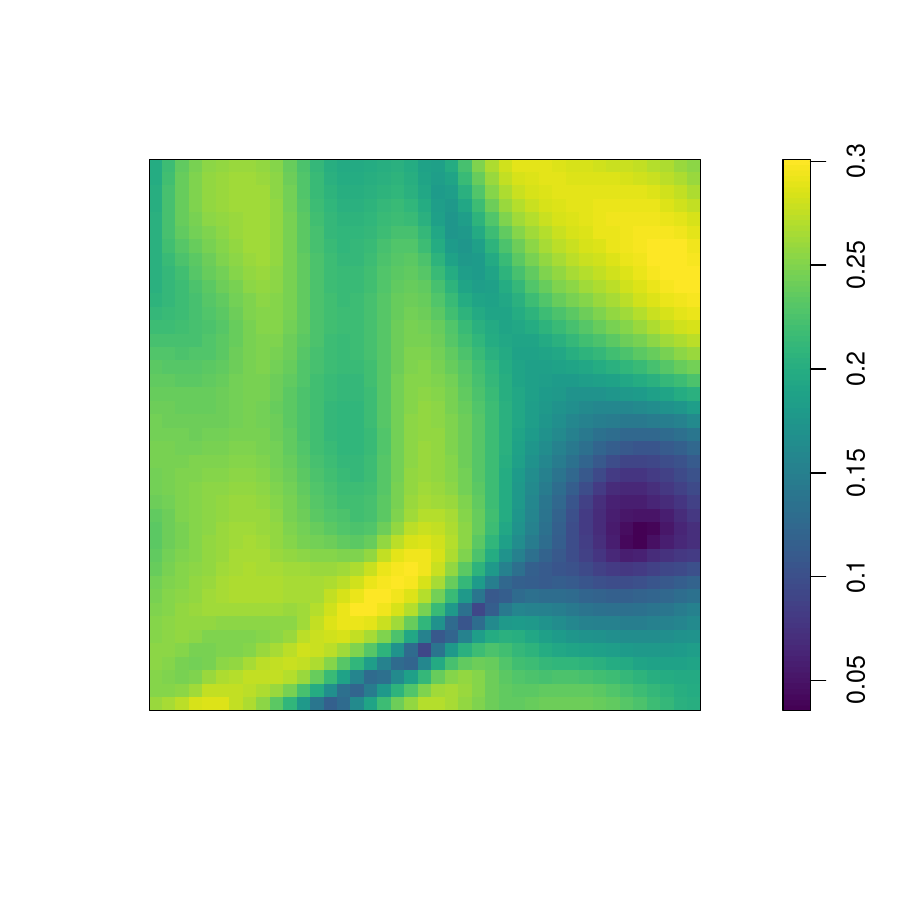}
\includegraphics[width=0.56\textwidth]{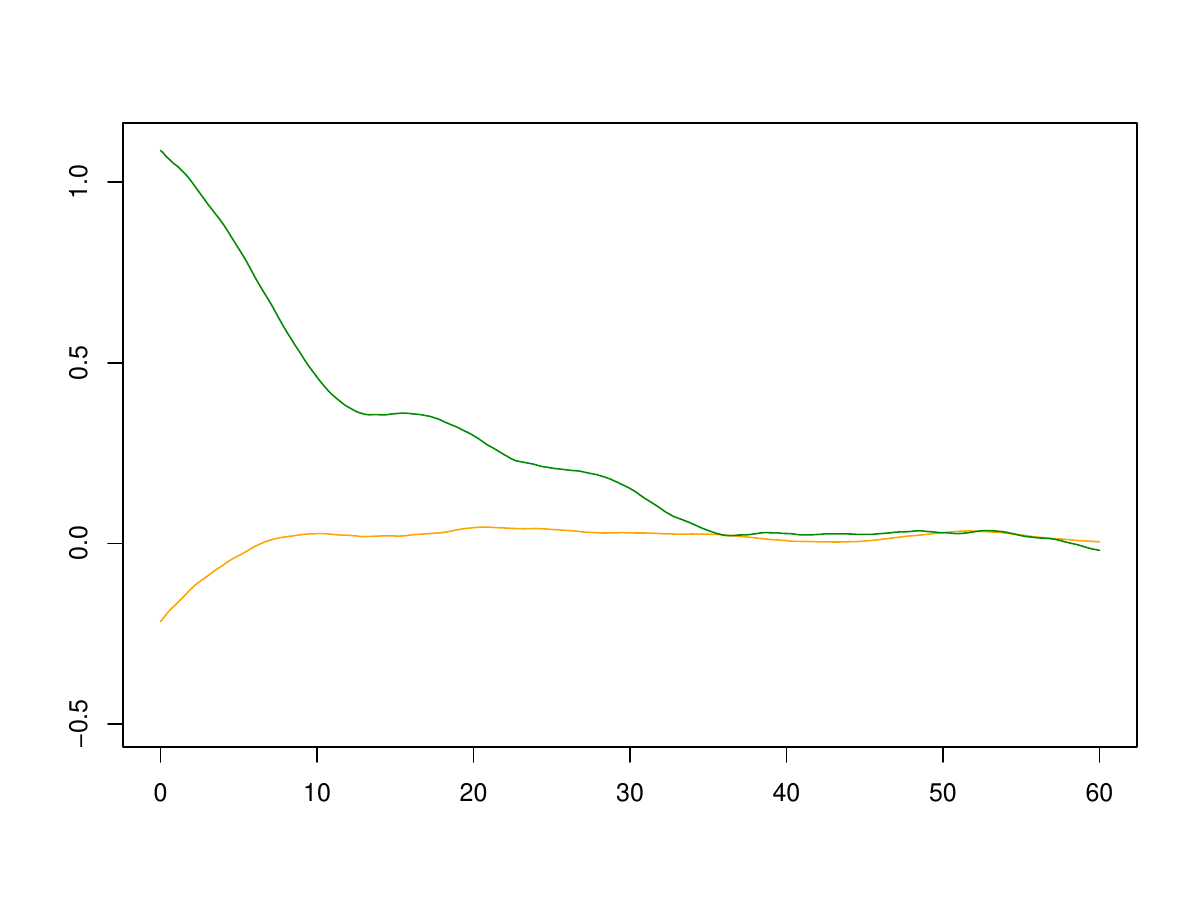}
\caption{Spatial distribution of the square root of terrain slope (left) and the estimated covariance (green) and cross-covariance (orange) functions (right).}\label{fig:6}
\end{figure}

\section{Conclusion}

This paper revisited the estimation of the mean and covariance structures in geostatistical models under preferential sampling. We relaxed the restrictive linear dependence assumption between the point process and the marks in the preferential sampling framework of \citet{diggle2010geostat}, allowing the cross-covariance function to take a general, isotropic form. Within this extended setting, we showed surprising findings that the least squares estimator for the regression coefficients and the kernel-based estimators for the spatial semi-variogram and cross-covariance remain consistent and unbiased, except for a bias in the intercept term of the regression coefficients. This bias can be corrected using the estimated cross-covariace at lag zero. Building on these results, we proposed unbiased estimators to infer the geostatistical model, without specifying a parametric sampling mechanism. Simulations under varying cross-correlation structures and an application to tropical rainforest data demonstrated that our method outperforms the likelihood-based approaches in estimation accuracy, computational efficiency and modelling flexibility.

For future work, the first direction would be to extend our proposed method to specialized geostatistical models under more complex preferential sampling mechanism, such that those with spatially varying regression coefficients or spatially varying sampling degrees. Second, it would be interesting to study the scenarios where the underlying point process moves beyond an LGCP, e.g.\ repulsive point processes that introduce inhibition among sampling locations.

\phantomsection\label{supplementary-material}
\bigskip

\begin{center}

{\large\bf SUPPLEMENTARY MATERIAL}

\end{center}

\begin{description}
\item[Lemma 1 and Its Proof:]
We will need the following lemma to prove Theorems~1--3.

\begin{lemma}
\label{lemma:1}
Suppose that $X$ and $Y$ are two Gaussian random variables with means $\mu_X,\mu_Y$. Then,
\begin{equation*}
\begin{split}
&\E{[Y\exp(X)]}=\left\{\Cov(X,Y)+\mu_Y \right\}\E[\exp(X)],\\
&\E{\left[Y^2\exp(X)\right]}=\left\{\Var(Y)+\left[\Cov(X,Y)+\mu_Y\right]^2\right\}\E{[\exp(X)]}.
\end{split}
\end{equation*}
\end{lemma}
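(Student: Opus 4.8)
The plan is to reduce both identities to one-dimensional moment computations for $X$ by exploiting the joint Gaussianity of $(X,Y)$, which is what the application requires even though the statement only says each variable is Gaussian. First I would write the regression decomposition of $Y$ onto $X$,
\[
Y = \mu_Y + \beta(X-\mu_X) + W, \qquad \beta = \frac{\Cov(X,Y)}{\Var(X)},
\]
where $W$ has mean zero, variance $\Var(Y)-\beta^2\Var(X)$, and—crucially—is independent of $X$ because the pair is jointly Gaussian. This turns every expectation of the form $\E[g(Y)\exp(X)]$ into expectations involving only $X$ together with the independent factor $W$, so the dependence on $Y$ is handled entirely through the scalar $\beta$.

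The second step is to record the univariate moments of $X$ weighted by $\exp(X)$. Differentiating the moment generating function $t\mapsto\E[\exp(tX)]=\exp(t\mu_X+t^2\sigma_X^2/2)$ at $t=1$ gives $\E[\exp(X)]=\exp(\mu_X+\sigma_X^2/2)$, $\E[(X-\mu_X)\exp(X)]=\sigma_X^2\,\E[\exp(X)]$, and $\E[(X-\mu_X)^2\exp(X)]=\sigma_X^2(1+\sigma_X^2)\,\E[\exp(X)]$; equivalently these follow by completing the square in the Gaussian density. These three identities are the only analytic input needed.

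For the first identity I would substitute the decomposition into $\E[Y\exp(X)]$, use linearity, and note the $W$-term factors as $\E[W]\,\E[\exp(X)]=0$ by independence, leaving $\mu_Y\E[\exp(X)]+\beta\sigma_X^2\E[\exp(X)]=\{\mu_Y+\Cov(X,Y)\}\E[\exp(X)]$ since $\beta\sigma_X^2=\Cov(X,Y)$. For the second identity I would expand $Y^2=(\mu_Y+\beta(X-\mu_X)+W)^2$, multiply by $\exp(X)$, and take expectations term by term: the two cross terms linear in $W$ vanish (since $\E[W]=0$ and $W\perp X$), while $W^2$ contributes $\Var(W)\E[\exp(X)]$. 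Substituting the moment formulas together with $\Var(W)=\Var(Y)-\beta^2\sigma_X^2$ and $\beta\sigma_X^2=\Cov(X,Y)$ then collapses everything to $\{\Var(Y)+[\mu_Y+\Cov(X,Y)]^2\}\E[\exp(X)]$. The only genuine difficulty is the algebraic bookkeeping in this last collapse, and it can be sidestepped by an exponential-tilting argument: tilting the law of $(X,Y)$ by $\exp(X)/\E[\exp(X)]$ yields a Gaussian with unchanged covariance but with the $Y$-mean shifted to $\mu_Y+\Cov(X,Y)$, so the ratios $\E[Y\exp(X)]/\E[\exp(X)]$ and $\E[Y^2\exp(X)]/\E[\exp(X)]$ are exactly the tilted first and second moments of $Y$, delivering both identities simultaneously.
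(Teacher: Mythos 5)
Your proposal is correct, and it takes a genuinely different route from the paper. The paper's proof centres both variables and then invokes Stein's lemma as a black box: for zero-mean jointly Gaussian $(X',Y')$ it uses $\E[Y'\exp(X')]=\Cov(X',Y')\,\E[\exp(X')]$ and the analogous second-moment identity, then un-centres. You instead project $Y$ onto $X$ via the regression decomposition $Y=\mu_Y+\beta(X-\mu_X)+W$ with $W\perp X$, reducing everything to explicit MGF-derivative moments of the univariate Gaussian $X$; your algebra checks out (e.g.\ $\beta^2\sigma_X^2(1+\sigma_X^2)+\Var(Y)-\beta^2\sigma_X^2+2\mu_Y\beta\sigma_X^2+\mu_Y^2$ does collapse to $\Var(Y)+[\mu_Y+\Cov(X,Y)]^2$). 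What your approach buys is self-containedness — no appeal to Stein's lemma, only independence and completing the square — and your exponential-tilting reformulation is arguably the cleanest of all three arguments, since the tilted law of $(X,Y)$ is Gaussian with unchanged covariance and $Y$-mean $\mu_Y+\Cov(X,Y)$, which delivers both identities (and indeed all moments of $Y$ weighted by $\exp(X)$) in one stroke; that generality would also streamline the fourth-moment computations the paper needs in the proofs of Theorems 2 and 3. What the paper's route buys is brevity given Stein's lemma as a known tool, which is the standard idiom in this literature. Two small points: you are right to flag that joint Gaussianity is the real hypothesis — the lemma as stated (marginally Gaussian $X$ and $Y$) is not sufficient, and the paper's own proof also silently assumes jointness via Stein's lemma; and your $\beta=\Cov(X,Y)/\Var(X)$ is undefined in the degenerate case $\Var(X)=0$, though there $X$ is constant, $\Cov(X,Y)=0$, and both identities hold trivially, so this costs nothing.
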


\begin{proof}
    Let $X^\prime$ and $Y^\prime$ be zero-mean centred Gaussian random variables associated with $X$ and $Y$. By the Stein's Lemma,
    \begin{equation*}
    \begin{split}
        \E [Y\exp(X)]&=\E[(Y^\prime+\mu_Y)\exp(X^\prime+\mu_X)]\\
        &=\left\{\E [Y^\prime \exp(X^\prime)] +\mu_Y \E[\exp(X^\prime)]\right\}\exp(\mu_X)\\
        &=\left\{\Cov(X^\prime,Y^\prime)\E[\exp(X^\prime)]+\mu_Y \E[\exp(X^\prime)]\right\}\exp(\mu_X)\\
        &=\left\{\Cov(X,Y)+\mu_Y \right\}\E[\exp(X)],
    \end{split}
    \end{equation*}
    and
    \begin{equation*}
    \begin{split}
        \E [Y^2\exp(X)]&=\E[(Y^\prime+\mu_Y)^2\exp(X^\prime+\mu_X)]\\
        &=\left\{\E[(Y^\prime)^2\exp(X^\prime)]
        +2\mu_Y\E[Y^\prime\exp(X^\prime)]
        +\mu_Y^2\E[\exp(X^\prime)]\right\}\exp(\mu_X)\\
        &=\left\{[\Var(Y^\prime)+\Cov(X^\prime,Y^\prime)^2]+2\mu_Y\Cov(X^\prime,Y^\prime)+\mu_Y^2\right\}\\
        &\quad\quad\E[\exp(X^\prime)]\exp(\mu_X)\\
        &=\left\{\Var(Y)+[\Cov(X,Y)+\mu_Y]^2\right\}\E [\exp(X)].
    \end{split}
    \end{equation*}
\end{proof}

\item[Proof of Theorem 1:]

\begin{proof}
Under the defined asymptotic regime, the estimator (3) becomes
\begin{equation*}
    \tbbeta_n=\left[\sum_{\bs \in N\cap S_n}\bw(\bs)\bw(\bs)^\top\right]^{-1}\sum_{\bm{s}\in N\cap S_n}\bw(\bs)Z(\bs).
\end{equation*}
It minimizes $\sum_{\bs\in N\cap S_n} [Z(\bs)-\bw(\bs)^\top\bbeta]^2$, thus leads to the estimating equation:
\begin{equation*}
    \bre_n(\bbeta)=\sum_{\bs\in N\cap S_n} \left[Z(\bs)-\bw(\bs)^\top\bbeta\right]\bw(\bs)=\bm{0}.
\end{equation*}
Using the Taylor expansion, we obtain
\begin{equation*}
    |S_n|^{1/2}(\tbbeta_n-\bbeta^*_0)=\left[-\frac{\nabla \bre_n(\tilde{\bbeta}_n)}{|S_n|}\right]^{-1}\frac{\bre_n(\bbeta^*_0)}{|S_n|^{1/2}},
\end{equation*}
where $\nabla \bre_n(\bbeta)$ is the gradient of $\bre_n(\bbeta)$ with respect to $\bbeta$, and $\tilde{\bbeta}_n$ is a convex combination of $\tbbeta_n$ and $\bbeta^*_0$. 

First, by an application of the Campbell's theorem, we have 
\begin{equation*}
    \E\left[-\frac{\nabla \bre_n(\bbeta^*_0)}{|S_n|}\right]=\E\left[\frac{1}{|S_n|}\sum_{\bs\in N\cap S_n} \bw(\bs)\bw(\bs)^\top\right]=\frac{1}{|S_n|}\int_{S_n} \rho(\bs)\bw(\bs)\bw(\bs)^\top\rd\bs,
\end{equation*}
which, under conditions (C3)--(C4), is an $O(1)$. The variance of the $(i,j)$-th component in $\E[-\nabla \bre_n(\bbeta^*_0)/|S_n|]$ is
\begin{equation*}
\begin{split}
    &\frac{1}{|S_n|^2}\int_{S_n}\bw_{i}(\bs)^2\bw_{j}(\bs)^2\rho(\bs)\rd\bs\\
    &+\frac{1}{|S_n|^2}\int_{S_n}\int_{S_n}\bw_{i}(\bs)\bw_{j}(\bs)\bw_{i}(\bt)\bw_{j}(\bt)\left[\rho_2(\bs,\bt)-\rho(\bs)\rho(\bt)\right]\rd\bs\rd\bt,
\end{split}
\end{equation*}
which, under conditions (C3)--(C4), converges to zero as $n\to \infty$. 

Second, we analyze 
\begin{equation*}
    \frac{\bre_n(\bbeta^*_0)}{|S_n|^{1/2}}
    =\frac{1}{|S_n|^{1/2}}\left[Z(\bs)-\bw(\bs)^\top\bbeta^*_0\right]\bw(\bs).
\end{equation*}
Applying the Campbell's theorem, we have 
\begin{equation*}
\begin{split}
    \E\left[\frac{\bre_n(\bbeta^*_0)}{|S_n|^{1/2}}\right]
    =\frac{1}{|S_n|^{1/2}}\int_{S_n}&\E \{[Y(\bs)+e(\bs)-C_{XY}(0)]\exp[X(\bs)]\}\\
    &\lambda_0(\bs)\bw(\bs)\rd\bs=0. 
\end{split}
\tag{S1}
\end{equation*}
Moreover,
\begin{equation*}
\begin{split}
    \E\left[\frac{\bre_n(\bbeta^*_0)^2}{|S_n|}\right]
    &=\frac{1}{|S_n|}\int_{S_n} \lambda_0(\bs)\E \left\{[Y(\bs)+e(\bs)-C_{XY}(0)]^2\exp[X(\bs)]\right\}\bw(\bs)\bw(\bs)^\top\rd\bs\\
    &+\frac{1}{|S_n|}\int_{S_n}\int_{S_n} \E\left\{[Y(\bs)+e(\bs)-C_{XY}(0)][Y(\bt)+e(\bt)-C_{XY}(0)]\right.\\
    &\quad\quad\quad\quad\quad\quad\ \left.\exp[X(\bs)+X(\bt)]\right\}
    \lambda_0(\bs)\lambda_0(\bt)\bw(\bs)\bw(\bt)^\top\rd\bs\rd\bt.
\end{split}
\end{equation*}
By Lemma 1 and the Isserlis' theorem,
\begin{equation*}
    \E \left\{[Y(\bs)+e(\bs)-C_{XY}(0)]^2\exp[X(\bs)]\right\}=\left(\sigma_Y^2+\sigma_e^2\right)\E \{\exp[X(\bs)]\}
\end{equation*}
and
\begin{equation*}
\begin{split}
    &\E\left\{[Y(\bs)+e(\bs)-C_{XY}(0)][Y(\bt)+e(\bt)-C_{XY}(0)]\exp[X(\bs)+X(\bt)]\right\}\\
    &=\left[C_Y(\|\bs-\bt\|)+C_{XY}(\|\bs-\bt\|)^2\right]\E \left\{\exp[X(\bs)+X(\bt)]\right\}.
\end{split}
\end{equation*}
Hence, we have
\begin{equation*}
\begin{split}
    \E\left[\frac{\bre_n(\bbeta^*_0)^2}{|S_n|}\right]
    =&\frac{1}{|S_n|}\int_{S_n} \rho(\bs)\left(\sigma_Y^2+\sigma_e^2\right)\bw(\bs)\bw(\bs)^\top\rd\bs\\
    +&\frac{1}{|S_n|}\int_{S_n}\int_{S_n} \rho_2(\bs,\bt)\left[C_Y(\|\bs-\bt\|)+C_{XY}(\|\bs-\bt\|)^2\right]\bw(\bs)\bw(\bt)^\top\rd\bs\rd\bt,
\end{split}
\tag{S2}
\end{equation*}
which is an $O(1)$ under conditions (C3)--(C6).

Next, we divide $S_n$ into $l_n$ disjoint sub-blocks of equal volume, up to negligible boundary corrections, which are asymptotically independent due to strong mixing. Denote a sub-block by $B_{n,i}$, thus $S_n=\cup_{i=1}^{l_n} B_{n,i}$. Write 
\begin{equation*}
    T_{n,i}=\sum_{\bs\in N\cap B_{n,i}} \left[Z(\bs)-\bw(\bs)^\top\bbeta\right]\bw(\bs).
\end{equation*}
According to (S1)--(S2), $\Var(\sum_{i=1}^{l_n} T_{n,i})=O(|S_n|)$. Following Lemma 1 in \citet{guan2007asymptotics},
\begin{equation*}
    |B_{n,i}|^2\E\left[\left(T_{n,i}/|B_{n,i}|\right)^4\right]<\infty
\end{equation*}
under conditions (C2)--(C4). Then, we have
\begin{equation*}
    \lim_{n\to\infty}\frac{\sum_{i=1}^{l_n}\E\left[\left(T_{n,i}\right)^4\right]}{\left[\Var\left(\sum_{i=1}^{l_n} T_{n,i}\right)\right]^2}
    =\lim_{n\to\infty}\frac{\sum_{i=1}^{l_n}|B_{n,i}|^2}{\left(\sum_{i=1}^{l_n}|B_{n,i}|\right)^2}=\lim_{n\to\infty}\frac{1}{l_n}=0.
\end{equation*}
By the Lyapunov's central limit theorem, $\bre_n(\bbeta^*_0)^2/|S_n|^{1/2}$ converges in distribution to a normally distributed vector with mean zero and covariance matrix as (S2).

The derivations above show that
\begin{equation*}
    -\frac{\nabla \bre_n(\bbeta^*_0)}{|S_n|}\xrightarrow{p}\frac{1}{|S_n|}\bA_n,\quad
    \frac{\bre_n(\bbeta^*_0)}{|S_n|^{1/2}}\xrightarrow{d}\mathrm{N(\bm{0},\bB_n+\bC_n)}.
\end{equation*}
By the Slutsky's theorem, the limiting covariance matrix of $\tbbeta_n$ is given by
\begin{equation*}
    \bSigma_n= |S_n|\bA_n^{-1}[\bB_n+\bC_n]\bA_n^{-1}.
\end{equation*}
The remainder of the proof follows arguments similar to those used in establishing the asymptotic normality of first-order estimating equations for point processes, e.g.\ \citet{schoenberg2005consistency}.
\end{proof}

\item[Proof of Theorem 2:]

\begin{proof}
    To establish the consistency of the estimator (5) to $V(r)$, we first show the convergence of the two sequences of random variables in the numerator and denominator, separately. For convenience, write
\begin{equation*}
\begin{split}
    A_{1,n}(r)&=\frac{1}{|S_n|}\mathop{\sum\sum}^{\neq}_{\bs,\bt\in N\cap S_n}\left\{\left[Z(\bs)-\bw(\bs)^\top\tbbeta_n\right]-\left[Z(\bt)-\bw(\bt)^\top\tbbeta_n\right]\right\}^2K_{h_n}(\|\bs-\bt\|-r),\\
    A_{2,n}(r)&=\frac{1}{|S_n|}\mathop{\sum\sum}^{\neq}_{\bs,\bt\in N\cap S_n}K_{h_n}(\|\bs-\bt\|-r).
\end{split}
\end{equation*}
Recall that $\tbbeta_n$ denotes the estimator (3) for $\bbeta^*_0$ on $S_n$. We define $\Delta\bbeta_n=\bbeta^*_0-\tbbeta_n$. Then, $A_{1,n}(r)$ can be decomposed as
\begin{equation*}
    A_{1,n}(r)=\mathcal{A}_n(r)+\mathcal{B}_n(r)+\mathcal{C}_n(r),
\end{equation*}
where 
\begin{equation*}
\begin{split}
    &\mathcal{A}_n(r)=\frac{1}{|S_n|}\mathop{\sum\sum}^{\neq}_{\bs,\bt\in N\cap S_n}\left[Y(\bs)+e(\bs)-Y(\bt)-e(\bt)\right]^2K_{h_n}(\|\bs-\bt\|-r),\\
    &\mathcal{B}_n(r)=\frac{2}{|S_n|}\mathop{\sum\sum}^{\neq}_{\bs,\bt\in N\cap S_n}\left[Y(\bs)+e(\bs)-Y(\bt)-e(\bt)\right]\left[\bw(\bs)^\top-\bw(\bt)^\top\right]\Delta\bbeta_n\\
    &\quad\quad\quad\quad\quad\quad\quad\quad\quad\ K_h(\|\bs-\bt\|-r),\\
    &\mathcal{C}_n(r)=\frac{1}{|S_n|}\mathop{\sum\sum}^{\neq}_{\bs,\bt\in N\cap S_n}\left\{\left[\bw(\bs)^\top-\bw(\bt)^\top\right]\Delta\bbeta_n\right\}^2K_h(\|\bs-\bt\|-r).
\end{split}
\end{equation*}
We now prove the convergence of the three sequences $\mathcal{A}_n(r),\mathcal{B}_n(r)$ and $\mathcal{C}_n(r)$. 

Consider $\mathcal{A}_n(r)$. By an application of the Campbell's theorem, we have
\begin{equation*}
\begin{split}
    \E[\mathcal{A}_n(r)]=&\frac{1}{|S_n|}\int_{S_n} \int_{S_n} \lambda_0(\bs)\lambda_0(\bt)\E\left\{\left[Y(\bs)+e(\bs)-Y(\bt)-e(\bt)\right]^2\exp[X(\bs)+X(\bt)]\right\}\\
    &\quad\quad\quad\quad\quad\ K_{h_n}(\|\bs-\bt\|-r)\rd\bs\rd\bt.
\end{split}
\end{equation*}
Since
\begin{equation*}
\begin{split}
    &\Var[Y(\bs)+e(\bs)-Y(\bt)-e(\bt)]=2\left[\sigma_Y^2+\sigma_e^2-C_Y(\|\bs-\bt\|)\right],\\
    &\Cov\left[Y(\bs)+e(\bs)-Y(\bt)-e(\bt),X(\bs)+X(\bt)\right]=0,
\end{split}
\end{equation*}
it follows from Lemma 1 that
\begin{equation*}
\begin{split}
    \E[\mathcal{A}_n(r)]=
    \frac{2}{|S_n|}\int_{S_n} \int_{S_n} \rho_2(\bs,\bt)\left[\sigma_Y^2+\sigma_e^2-C_Y(\|\bs-\bt\|)\right]K_{h_n}(\|\bs-\bt\|-r)\rd\bs\rd\bt.
\end{split}
\end{equation*}
With slight abuse of notation, set $a=\|\bs-\bt\|-r/h_n$ and write $\rho_2(\bs,\bt)=\rho_2\{\bs,\bs+(r+ah_n)[\cos(\psi),sin(\psi)]\}$ with $\psi\in[0,2\pi)$. Under conditions (C3)--(C4), there exists a constant $c_3>0$ such that 
\begin{equation*}
\begin{split}
    \left|\E[\mathcal{A}_n(r)]\right|\leq
    \frac{1}{|S_n|}\int_{S_n} c_3\rd\bs \int K(a)\rd a,
\end{split}
\end{equation*}
implying that $\E[\mathcal{A}_n(r)]= O(1)$. Moreover, as $h_n\to 0$ and under conditions (C3)--(C5), the following term dominates over the other higher-order terms in $\Var[\mathcal{A}_n(r)]$: 
\begin{equation*}
\begin{split}
    &\frac{2}{|S_n|^2}\int_{S_n} \int_{S_n} \lambda_0(\bs)\lambda_0(\bt)\E\left\{\left[Y(\bs)+e(\bs)-Y(\bt)-e(\bt)\right]^4\exp[X(\bs)+X(\bt)]\right\}\\
    &\quad\quad\quad\quad\quad K_{h_n}(\|\bs-\bt\|-r)^2\rd\bs\rd\bt.
\end{split}
\end{equation*}
By the Isserlis’ theorem and Lemma 1,  
\begin{equation*}
\begin{split}
    &\E\left\{[Y(\bs)+e(\bs)-Y(\bt)-e(\bt)]^4\exp[X(\bs)+X(\bt)]\right\}\\
    &=3\Var[Y(\bs)+e(\bs)-Y(\bt)-e(\bt)]\\
    &\quad\ \E\left\{[Y(\bs)+e(\bs)-Y(\bu)-e(\bt)]^2\exp[X(\bs)+X(\bt)]\right\}
\end{split}
\end{equation*}
and
\begin{equation*}
\begin{split}
    &\E\left\{[Y(\bs)+e(\bs)-Y(\bt)-e(\bt)]^2\exp[X(\bs)+X(\bt)]\right\}\\
    &=2\left[\sigma_Y^2+\sigma_e^2-C_Y(\|\bs-\bt\|)\right]\E\left\{\exp[X(\bs)+X(\bt)]\right\}.
\end{split}
\end{equation*}
Then, we have 
\begin{equation*}
\begin{split}
    &\E\left\{[Y(\bs)+e(\bs)-Y(\bt)-e(\bt)]^4\exp[X(\bs)+X(\bt)]\right\}\\
    &=12\left[\sigma_Y^2+\sigma_e^2-C_Y(\|\bs-\bt\|)\right]^2\E\left\{\exp[X(\bs)+X(\bt)]\right\}.
\end{split}
\end{equation*}
Under conditions (C3)--(C4), there exists a constant $c_4>0$ such that
\begin{equation*}
\begin{split}
    &\frac{2}{|S_n|^2}\int_{S_n} \int_{S_n} \lambda_0(\bs)\lambda_0(\bt)\E\left\{\left[Y(\bs)+e(\bs)-Y(\bt)-e(\bt)\right]^4\exp[X(\bs)+X(\bt)]\right\}\\
    &\quad\quad\quad\quad\quad\ \ \ K_{h_n}(\|\bs-\bt\|-r)^2\rd\bs\rd\bt\\
    &\leq \frac{2}{|S_n|^2}\int_{S_n} c_4 \rd\bs \int \frac{1}{h_n}K(a)^2\rd a=O\left(\frac{1}{|S_n|h_n}\right).
\end{split}
\end{equation*}
As $|S_n|h_n\to \infty$, this term vanishes, implying that
\begin{equation*}
     \mathcal{A}_n(r)\xrightarrow{p} \frac{2}{|S_n|}\int_{S_n} \int_{S_n} \rho_2(\bs,\bt)\left[\sigma_Y^2+\sigma_e^2-C_Y(\|\bs-\bt\|)\right]K_{h_n}(\|\bs-\bt\|-r)\rd\bs\rd\bt
\tag{S3}
\end{equation*}
with convergence rate $O[(|S_n|h_n)^{-1/2}]$.

Consider $\mathcal{B}_n(r)$. By Theorem 1 and under condition (C3), applying the Campbell’s theorem gives
\begin{equation*}
\begin{split}
    \E[\mathcal{B}_n(r)]=&\frac{2}{|S_n|}\int_{S_n} \int_{S_n} \lambda_0(\bs)\lambda_0(\bt) \E\left\{\left[Y(\bs)+e(\bs)-Y(\bt)-e(\bt)\right]\exp[X(\bs)+X(\bt)]\right\}\\
    &\quad\quad\quad\quad\quad\ K_h(\|\bs-\bt\|-r)O\left(\frac{1}{|S_n|^{1/2}}\right)\rd\bs\rd\bt.
\end{split}
\end{equation*}
By Lemma1,
\begin{equation*}
    \E\left\{\left[Y(\bs)+e(\bs)-Y(\bt)-e(\bt)\right]\exp[X(\bs)+X(\bt)]\right\}=0.
\end{equation*}
Hence, we have $\E[\mathcal{B}_n(r)]=0$.
Moreover, as $h_n\to 0$ and under condition (C3)--(C5), the following term dominates over the other higher-order terms in $\Var[\mathcal{B}_n(r)]$: 
\begin{equation*}
\begin{split}
    &\frac{2}{|S_n|^2}\int_{S_n} \int_{S_n} \lambda_0(\bs)\lambda_0(\bt)\E\left\{\left[Y(\bs)+e(\bs)-Y(\bt)-e(\bt)\right]^2\exp[X(\bs)+X(\bt)]\right\}\\
    &\quad\quad\quad\quad\quad K_{h_n}(\|\bs-\bt\|-r)^2 O\left(\frac{1}{|S_n|}\right) \rd\bs\rd\bt.
\end{split}
\end{equation*}
As derived above,
\begin{equation*}
\begin{split}
    &\E\left\{[Y(\bs)+e(\bs)-Y(\bt)-e(\bt)]^2\exp[X(\bs)+X(\bt)]\right\}\\
    &=2\left[\sigma_Y^2+\sigma_e^2-C_Y(\|\bs-\bt\|)\right]\E\left\{\exp[X(\bs)+X(\bt)]\right\}.
\end{split}
\end{equation*}
Under conditions (C3)--(C4), there exists a constant $c_5>0$ such that 
\begin{equation*}
\begin{split}
    &\frac{2}{|S_n|^2}\int_{S_n} \int_{S_n} \lambda_0(\bs)\lambda_0(\bt)\E\left\{\left[Y(\bs)+e(\bs)-Y(\bt)-e(\bt)\right]^2\exp[X(\bs)+X(\bt)]\right\}\\
    &\quad\quad\quad\quad\quad K_{h_n}(\|\bs-\bt\|-r)^2 O\left(\frac{1}{|S_n|}\right) \rd\bs\rd\bt\\
    &\leq \frac{2}{|S_n|^2}\int_{S_n} \frac{c_5}{|S_n|} \rd\bs \int \frac{1}{h_n}K(a)^2\rd a=O\left(\frac{1}{|S_n|^2h_n}\right).
\end{split}
\end{equation*}
As $|S_n|\to \infty$ and $|S_n|h_n\to \infty$, this term vanishes, implying that
\begin{equation*}
    \mathcal{B}_n(r)\xrightarrow{p} 0
\tag{S4}
\end{equation*}
with convergence rate $O[(|S_n|^2h_n)^{-1/2}]$. Note that this rate is faster than that of $\mathcal{A}_n(r)$.

Similarly, for $\mathcal{C}_n(r)$, it can be shown that 
$
    \mathcal{C}_n(r)\xrightarrow{p} 0,
$
with convergence rate $O[(|S_n|^3h_n)^{-1/2}]$. By the continuous mapping theorem and recalling (S3)--(S4), we obtain 
\begin{equation*}
    A_{1,n}(r)\xrightarrow{p} \frac{2}{|S_n|}\int_{S_n} \int_{S_n} \rho_2(\bs,\bt)\left[\sigma_Y^2+\sigma_e^2-C_Y(\|\bs-\bt\|)\right]K_{h_n}(\|\bs-\bt\|-r)\rd\bs\rd\bt
\tag{S5}
\end{equation*}
with convergence rate $O[(|S_n|h_n)^{-1/2}]$.

Second, by an application of the Campbell’s theorem, we have
\begin{equation*}
    \E[A_{2,n}(r)]=\frac{1}{|S_n|}\int_{S_n}\int_{S_n} \rho_2(\bs,\bt)K_{h_n}(\|\bs-\bt\|-r)\rd\bs\rd\bt.
\end{equation*}
Under conditions (C3)--(C4), there exists a constant $c_6>0$ such that 
\begin{equation*}
\begin{split}
    \left|\E[A_{2,n}(r)]\right|\leq
    \frac{1}{|S_n|}\int_{S_n} c_6\rd\bs \int K(a)\rd a,
\end{split}
\end{equation*}
implying that $\E[\mathcal{A}_n(r)]= O(1)$. Moreover, as $h_n\to 0$ and under condition (C3)--(C5), the following term dominates over the other higher-order terms in $\Var[A_{2,n}(r)]$: 
\begin{equation*}
    \frac{2}{|S_n|^2}\int_{S_n} \int_{S_n} \rho_2(\bs,\bt)K_{h_n}(\|\bs-\bt\|-r)^2\rd\bs\rd\bt,
\end{equation*}
and can be bounded by
\begin{equation*}
    \frac{2}{|S_n|^2}\int_{S_n} c_7 \rd\bs \int \frac{1}{h_n}K(a)^2\rd a=O\left(\frac{1}{|S_n|h_n}\right)
\end{equation*}
under conditions (C3)--(C4), where $c_7$ is a constant $>0$. As $|S_n|h_n\to \infty$, the term above converges to zero, implying that 
\begin{equation*}
    \E[A_{2,n}(r)]\xrightarrow{p} \frac{1}{|S_n|}\int_{S_n}\int_{S_n} \rho_2(\bs,\bt)K_{h_n}(\|\bs-\bt\|-r)\rd\bs\rd\bt
\end{equation*}
with convergence rate $O[(|S_n|h_n)^{-1/2}]$. By the continuous mapping theorem and recalling (S5), we obtain 
\begin{equation*}
    \frac{A_{1,n}(r)}{2A_{2,n}(r)} \xrightarrow{p} \frac{\int_{S_n} \int_{S_n} \rho_2(\bs,\bt)\left[\sigma_Y^2+\sigma_e^2-C_Y(\|\bs-\bt\|)\right]K_{h_n}(\|\bs-\bt\|-r)\rd\bs\rd\bt}{\int_{S_n}\int_{S_n} \rho_2(\bs,\bt)K_{h_n}(\|\bs-\bt\|-r)\rd\bs\rd\bt}
\tag{S6}
\end{equation*}
with convergence rate $O[(|S_n|h_n)^{-1/2}]$.

Next, we quantify the difference 
\begin{equation*}
    \mathcal{D}_n(r)=\frac{\int_{S_n} \int_{S_n} \rho_2(\bs,\bt)\left[\sigma_Y^2+\sigma_e^2-C_Y(\|\bs-\bt\|)\right]K_{h_n}(\|\bs-\bt\|-r)\rd\bs\rd\bt}{\int_{S_n}\int_{S_n} \rho_2(\bs,\bt)K_{h_n}(\|\bs-\bt\|-r)\rd\bs\rd\bt}-V(r).
\end{equation*}
Replacing $\bt$ with $\bs+(r+ah_n)[\cos(\psi),\sin(\psi)]$, we rewrite the first term in $\mathcal{D}_n(r)$ as
\begin{equation*}
    \frac{\int_{S_n}\int_0^{2\pi}\int \rho_2\{\bs,\bs+(r+ah_n)[\cos(\psi),\sin(\psi)]\}\left[\sigma_Y^2+\sigma_e^2-C_Y(r+ah_n)\right]K(a)\rd a\rd\psi\rd\bs}{\int_{S_n}\int_0^{2\pi}\int \rho_2\{\bs,\bs+(r+ah_n)[\cos(\psi),\sin(\psi)]\}K(a)\rd a\rd\psi\rd\bs}
\end{equation*}
Since $V_Y(r)$ is smooth in a neighbourhood of $r$, as $h_n\to 0$, the covariance function $C_Y(r)$ admits a first-order Taylor expansion:
\begin{equation*}
    C_Y(r+a_nh_n)=C_Y(r)+ah_nC_Y^\prime(r)+O(h_n^2).
\end{equation*}
where $C_Y^\prime(r)$ denotes the derivative of $C_Y(r)$. Under conditions (C3)--(C4), there exists a constant $c_8>0$ such that
\begin{equation*}
\begin{split}
    \mathcal{D}_n(r)\leq \frac{\int_{S_n}\int_0^{2\pi}c_8\rd\psi\rd\bs \int \left[|a|h_nC^\prime(r)+O(h_n^2)\right]K(a)\rd a}{\int_{S_n}\int_0^{2\pi}\int \rho_2\{\bs,\bs+(r+ah_n)[\cos(\psi),\sin(\psi)]\}K(a)\rd a\rd\psi\rd\bs}=O(h_n).
\end{split}
\end{equation*}
Then, recalling (S6), we have 
\begin{equation*}
    \frac{A_{1,n}(r)}{2A_{2,n}(r)}\xrightarrow{p} V_Y(r)
\end{equation*}
with convergence rate $O[h_n+(|S_n|h_n)^{-1/2}]$.

To establish consistency of the estimator (6) to $C_{XY}(r)$, we need only to show consistency of the sequence of random variables
\begin{equation*}
    \mathcal{E}_n(r)=\frac{1}{|S_n|}\mathop{\sum\sum}^{\neq}_{\bs,\bt\in N\cap S_n}\left[Y(\bs)+e(\bs)-C_{XY}(0)\right]K_{h_n}(\|\bs-\bt\|-r).
\end{equation*}
By an application of the Campbell's theorem, we have
\begin{equation*}
\begin{split}
    \E[\mathcal{E}_n(r)]&=\int_{S_n} \int_{S_n} \lambda_0(\bs)\lambda_0(\bt)\E\left\{[Y(\bs)+e(\bs)-C_{XY}(0)]\exp[X(\bs)+X(\bt)]\right\}\\
    &\quad\quad\quad\ \ K_{h_n}(\|\bs-\bt\|-r)\rd\bs\rd\bt.
\end{split}
\end{equation*}
By Lemma 1, 
\begin{equation*}
    \E\{[Y(\bs)+e(\bs)-C_{XY}(0)]\exp[X(\bs)+X(\bt)]\}=C_{XY}(\|\bs-\bt\|)\E\{\exp[X(\bs)+X(\bt)]\}.
\end{equation*}
Then, we have
\begin{equation*}
    \E[\mathcal{E}_n(r)]=\frac{1}{|S_n|}\int_{S_n} \int_{S_n} \rho_2(\bs,\bt)C_{XY}(\|\bs-\bt\|)K_{h_n}(\|\bs-\bt\|-r)\rd\bs\rd\bt.
\end{equation*}
The remainder of the proof proceeds with the same procedure as for $V_Y(r)$.

Furthermore, for the sill estimator, note that
\begin{equation*}
    \E\left\{\frac{1}{|S_n|}\sum_{\bs\in N\cap S_n}\left[Z(\bs)-\bw(\bs)^\top\tbbeta_n\right]^2\right\}=\frac{1}{|S_n|}\int_{S_n} \left(\sigma_Y^2+\sigma_e^2\right)\rho(\bs)\rd\bs
\end{equation*}
and 
\begin{equation*}
    \E\left(\frac{|N|}{|S_n|}\right)=\frac{1}{|S_n|}\int_{S_n} \rho(\bs)\rd\bs,
\end{equation*}
its consistency therefore follows by a similar argument.
\end{proof}

\item[Proof of Theorem 3:]

Note that the two estimating equations (\ref{e:9}) and (\ref{e:11}) follow a general form:
\begin{equation*}
    \bU(\btheta)=\mathop{\sum\sum}^{\neq}_{\bs,\bt\in N}\rw(\bs,\bt)\bzeta^{(1)}(\|\bs-\bt\|;\btheta)\left\{\left[\hat{Z}(\bs)-\hat{Z}(\bt)\right]^2-2\zeta(\|\bs-\bt\|;\btheta)\right\}=\bm 0.
\end{equation*}

Write $\brw(\bs,\bt)=\rw(\bs,\bt)\bzeta^{(1)}(\|\bs-\bt\|;\btheta)$ and
\begin{equation*}
    \bU_n^*(\btheta)=\mathop{\sum\sum}^{\neq}_{\bs,\bt\in N\cap S_n}\brw(\bs,\bt)\left\{\left[Z^*(\bs)-Z^*(\bt)\right]^2-2\zeta(\|\bs-\bt\|;\btheta)\right\}=\bm 0.
\end{equation*}
By Theorem 1 and following similar arguments in the proof of Theorem 2, we need only to show that $\E[\bU_n^*(\btheta_0)/|S_n|^2]=\bm 0$ and $\Var[\bU_n^*(\btheta_0)/|S_n|^2]$ converges to zero when $n\to\infty$.

First, we have
\begin{equation*}
\begin{split}
    \E\left[\frac{\bU_n^*(\btheta_0)}{|S_n|^2}\right]
    =\frac{1}{|S_n|^2} \int_{S_n}\int_{S_n}&\lambda_0(\bs)\lambda_0(\bt)\brw(\bs,\bt)\E\left(\left\{[Y(\bs)+e(\bs)-Y(\bt)-e(\bt)]^2\right.\right.\\
    &-\left.\left.2\left[\sigma_Y^2+\sigma_e^2-C_Y(\bs-\bt)\right]\right\}\exp[X(\bs)+X(\bt)]\right)\rd\bs\rd\bt.
\end{split}
\end{equation*}
By Lemma 1,
\begin{equation*}
\begin{split}
    &\E\left\{[Y(\bs)+e(\bs)-Y(\bt)-e(\bt)]^2\exp[X(\bs)+X(\bt)]\right\}\\
    &=2\left[\sigma_Y^2+\sigma_e^2-C_Y(\bs-\bt)\right]\E\left\{\exp[X(\bs)+X(\bt)]\right\}.
\end{split}
\end{equation*}
Hence, under conditions (C3)--(C4) and (C8)--(C9), $\E[\bU_n^*(\btheta_0)/|S_n|^2]=\bm 0$.

Second, we analyze
\begin{equation*}
\begin{split}
    \Var\left[\frac{\bU_n^*(\btheta_0)}{|S_n|^2}\right]=&\frac{2}{|S_n|^4}\int_{S_n}\int_{S_n}\lambda_0(\bs)\lambda_0(\bt)\brw(\bs,\bt)\brw(\bs,\bt)^{\top}\\
    &\quad\E\left(\left\{\left[Z^*(\bs)-Z^*(\bt)\right]^2-2\zeta(\|\bs-\bt\|;\btheta_0)\right\}^2\right.\\
    &\quad\quad\left.\exp[X(\bs)+X(\bt)]\right)\rd\bs\rd\bt\\
    +&\frac{4}{|S_n|^4}\int_{S_n}\int_{S_n}\int_{S_n}
    \lambda_0(\bs)\lambda_0(\bt)\lambda_0(\bu)\brw(\bs,\bt)\brw(\bs,\bu)^{\top}\\
    &\quad\E\left(\exp[X(\bs)+X(\bt)+X(\bu)]\right.\\
    &\quad\quad\left\{\left[Z^*(\bs)-Z^*(\bt)\right]^2-2\zeta(\|\bs-\bt\|;\btheta_0)\right\}\\
    &\quad\quad\left\{\left.\left[Z^*(\bs)-Z^*(\bu)\right]^2-2\zeta(\|\bs-\bu\|;\btheta_0)\right\}\right)\rd\bs\rd\bt\rd\bu\\
    +&\frac{1}{|S_n|^4}\int_{S_n}\int_{S_n}\int_{S_n}\int_{S_n}
    \lambda_0(\bs)\lambda_0(\bt)\lambda_0(\bu)\lambda_0(\bv)
    \brw(\bs,\bt)\brw(\bu,\bv)^{\top}\\
    &\quad\E\left(\exp[X(\bs)+X(\bt)+X(\bu)+X(\bv)]\right.\\
    &\quad\quad\left\{\left[Z^*(\bs)-Z^*(\bt)\right]^2-2\zeta(\|\bs-\bt\|;\btheta_0)\right\}\\
    &\quad\quad\left\{\left.\left[Z^*(\bu)-Z^*(\bv)\right]^2-2\zeta(\|\bu-\bv\|;\btheta_0)\right\}\right)
    \rd\bs\rd\bt\rd\bu\rd\bv.
\end{split}
\end{equation*}
Consider the fourth-order term. By the Isserlis's theorem,
\begin{equation*}
\begin{split}
    &\E\left\{\left[Z^*(\bs)-Z^*(\bt)\right]^2\left[Z^*(\bu)-Z^*(\bv)\right]^2\exp[X(\bs)+X(\bt)+X(\bu)+X(\bv)]\right\}\\
    &=\Var[Y(\bs)+e(\bs)-Y(\bt)-e(\bt)]\\
    &\quad\ \E\left\{[Y(\bu)+e(\bu)-Y(\bv)-e(\bv)]^2\exp[X(\bs)+X(\bt)+X(\bu)+X(\bv)]\right\}\\
    &+2\Cov\left[Y(\bs)+e(\bs)-Y(\bt)-e(\bt),Y(\bu)+e(\bu)-Y(\bv)-e(\bv)\right]\\
    &\quad\ \E\left\{[Y(\bs)+e(\bs)-Y(\bt)-e(\bt)][Y(\bu)+e(\bu)-Y(\bv)-e(\bv)]\right.\\
    &\quad\quad\left. \exp[X(\bs)+X(\bt)+X(\bu)+X(\bv)]\right\}\\
    &+\Cov\left[Y(\bs)+e(\bs)-Y(\bt)-e(\bt),X(\bs)+X(\bt)+X(\bu)+X(\bv)\right]\\
    &\quad\ \E\left\{[Y(\bs)+e(\bs)-Y(\bt)-e(\bt)][Y(\bu)+e(\bu)-Y(\bv)-e(\bv)]^2\right.\\
    &\quad\quad\left. \exp[X(\bs)+X(\bt)+X(\bu)+X(\bv)]\right\}.
\end{split}
\end{equation*}
By Lemma 1, 
\begin{equation*}
\begin{split}
    &\E\left\{[Y(\bu)+e(\bu)-Y(\bv)-e(\bv)]^2\exp[X(\bs)+X(\bt)+X(\bu)+X(\bv)]\right\}\\
    &=\left\{\Cov[Y(\bu)+e(\bu)-Y(\bv)-e(\bv),X(\bs)+X(\bt)+X(\bu)+X(\bv)]^2\right.\\
    &\quad+\left.\Var[Y(\bu)+e(\bu)-Y(\bv)-e(\bv)]\right\}\E\left\{\exp[X(\bs)+X(\bt)+X(\bu)+X(\bv)]\right\}.
\end{split}   
\end{equation*}
Moreover,
\begin{equation*}
\begin{split}
    &\E\left\{[Y(\bs)+e(\bs)-Y(\bt)-e(\bt)][Y(\bu)+e(\bu)-Y(\bv)-e(\bv)]\right.\\
    &\quad\left. \exp[X(\bs)+X(\bt)+X(\bu)+X(\bv)]\right\}\\
    &=\left\{\Cov[Y(\bs)+e(\bs)-Y(\bt)-e(\bt),Y(\bu)+e(\bu)-Y(\bv)-e(\bv)]\right.\\
    &\quad+\Cov[Y(\bs)+e(\bs)-Y(\bt)-e(\bt),X(\bs)+X(\bt)+X(\bu)+X(\bv)]\\
    &\quad\quad\left.\Cov[Y(\bu)+e(\bu)-Y(\bv)-e(\bv),X(\bs)+X(\bt)+X(\bu)+X(\bv)]\right\}\\
    &\quad\ \E\left\{\exp[X(\bs)+X(\bt)+X(\bu)+X(\bv)]\right\},
\end{split}
\end{equation*}
and
\begin{equation*}
\begin{split}
    &\E\left\{[Y(\bs)+e(\bs)-Y(\bt)-e(\bt)][Y(\bu)+e(\bu)-Y(\bv)-e(\bv)]^2\right.\\
    &\quad\left. \exp[X(\bs)+X(\bt)+X(\bu)+X(\bv)]\right\}\\
    &=\left\{2\Cov[Y(\bs)+e(\bs)-Y(\bt)-e(\bt),Y(\bu)+e(\bu)-Y(\bv)-e(\bv)]\right.\\
    &\quad\quad\ \Cov[Y(\bu)+e(\bu)-Y(\bv)-e(\bv),X(\bs)+X(\bt)+X(\bu)+X(\bv)]\\
    &\quad\ +\Cov[Y(\bs)+e(\bs)-Y(\bt)-e(\bt),X(\bs)+X(\bt)+X(\bu)+X(\bv)]\\
    &\quad\quad\ \Var[Y(\bu)+e(\bu)-Y(\bv)-e(\bv)]\\
    &\quad\ +\Cov[Y(\bs)+e(\bs)-Y(\bt)-e(\bt),X(\bs)+X(\bt)+X(\bu)+X(\bv)]\\
    &\quad\quad\ \left.\Cov[Y(\bu)+e(\bu)-Y(\bv)-e(\bv),X(\bs)+X(\bt)+X(\bu)+X(\bv)]^2\right\}\\
    &\quad\ \ \E\left\{\exp[X(\bs)+X(\bt)+X(\bu)+X(\bv)]\right\}
\end{split}
\end{equation*}
Then, we have
\begin{equation*}
\begin{split}
    &\E\left\{\left[Z^*(\bs)-Z^*(\bt)\right]^2\left[Z^*(\bu)-Z^*(\bv)\right]^2\exp[X(\bs)+X(\bt)+X(\bu)+X(\bv)]\right\}\\
    &=4\left[\sigma_Y^2+\sigma_e^2-C_Y(\|\bs-\bt\|)\right]\left[\sigma_Y^2+\sigma_e^2-C_Y(\|\bu-\bv\|)\right]\\
    &+2\left[\sigma_Y^2+\sigma_e^2-C_Y(\|\bs-\bt\|)\right]\\
    &\quad[C_{XY}(\bs-\bu)-C_{XY}(\bs-\bv)+C_{XY}(\bt-\bu)-C_{XY}(\bt-\bv)]^2\\
    &+2\left[C_Y(\bs-\bu)-C_Y(\bt-\bu)-C_Y(\bs-\bv)+C_Y(\bt-\bv)\right]^2\\
    &+4\left[C_Y(\bs-\bu)-C_Y(\bt-\bu)-C_Y(\bs-\bv)+C_Y(\bt-\bv)\right]\\
    &\quad\left[C_{XY}(\bs-\bu)-C_{XY}(\bt-\bu)+C_{XY}(\bs-\bv)-C_{XY}(\bt-\bv)\right]\\
    &\quad\left[C_{XY}(\bs-\bu)-C_{XY}(\bs-\bv)+C_{XY}(\bt-\bu)-C_{XY}(\bt-\bv)\right]\\
    &+2\left[\sigma_Y^2+\sigma_e^2-C_Y(\|\bu-\bv\|)\right]\\
    &\quad\left[C_{XY}(\bs-\bu)-C_{XY}(\bt-\bu)+C_{XY}(\bs-\bv)-C_{XY}(\bt-\bv)\right]^2\\
    &+\left[C_{XY}(\bs-\bu)-C_{XY}(\bt-\bu)+C_{XY}(\bs-\bv)-C_{XY}(\bt-\bv)\right]^2\\
    &\quad\left[C_{XY}(\bs-\bu)-C_{XY}(\bs-\bv)+C_{XY}(\bt-\bu)-C_{XY}(\bt-\bv)\right]^2.
\end{split}
\end{equation*}
Hence, 
\begin{equation*}
\begin{split}
    &\int_{S_n}\int_{S_n}\int_{S_n}\int_{S_n}
    \lambda_0(\bs)\lambda_0(\bt)\lambda_0(\bu)\lambda_0(\bv)
    \brw(\bs,\bt)\brw(\bu,\bv)^{\top}\\
    &\quad\E\left(\exp[X(\bs)+X(\bt)+X(\bu)+X(\bv)]\left\{\left[Z^*(\bs)-Z^*(\bt)\right]^2-2\zeta(\|\bs-\bt\|;\btheta_0)\right\}\right.\\
    &\quad\quad\left\{\left.\left[Z^*(\bu)-Z^*(\bv)\right]^2-2\zeta(\|\bu-\bv\|;\btheta_0)\right\}\right)
    \rd\bs\rd\bt\rd\bu\rd\bv\\
    &=\int_{S_n}\int_{S_n}\int_{S_n}\int_{S_n}\rho_4(\bs,\bt,\bu,\bv)
    \brw(\bs,\bt)\brw(\bu,\bv)^{\top}\\
    &\quad\ \ \left\{2\left[C_Y(\bs-\bu)-C_Y(\bt-\bu)-C_Y(\bs-\bv)+C_Y(\bt-\bv)\right]^2\right.\\
    &\quad\ \ +4\left[C_Y(\bs-\bu)-C_Y(\bt-\bu)-C_Y(\bs-\bv)+C_Y(\bt-\bv)\right]\\
    &\quad\quad\ \ \left[C_{XY}(\bs-\bu)-C_{XY}(\bt-\bu)+C_{XY}(\bs-\bv)-C_{XY}(\bt-\bv)\right]\\
    &\quad\quad\ \ \left[C_{XY}(\bs-\bu)-C_{XY}(\bs-\bv)+C_{XY}(\bt-\bu)-C_{XY}(\bt-\bv)\right]\\
    &\quad\ \ +\left[C_{XY}(\bs-\bu)-C_{XY}(\bt-\bu)+C_{XY}(\bs-\bv)-C_{XY}(\bt-\bv)\right]^2\\
    &\quad\quad\ \ \left.[C_{XY}(\bs-\bu)-C_{XY}(\bs-\bv)+C_{XY}(\bt-\bu)-C_{XY}(\bt-\bv)]^2\right\}\rd\bs\rd\bt\rd\bu\rd\bv,
\end{split}
\tag{S7}
\end{equation*}
which, under conditions (C3)--(C6) and (C8)--(C9), is an $O(|S_n|^3)$. Here, $\rho_4(\bs,\bt,\bu,\bv)$ is the fourth-order factorial density function of $N$. Similarly, we have
\begin{equation*}
\begin{split}
    &\E\left\{\left[Z^*(\bs)-Z^*(\bt)\right]^2\left[Z^*(\bs)-Z^*(\bu)\right]^2\exp[X(\bs)+X(\bt)+X(\bu)]\right\}\\
    &=2\left[\sigma_Y^2+\sigma_e^2-C_Y(\|\bs-\bt\|)\right]\\
    &\quad\left\{2\left[\sigma_Y^2+\sigma_e^2-C_Y(\|\bs-\bu\|)\right]+[C_{XY}(\bs-\bt)-C_{XY}(\bt-\bu)]^2\right\}\\
    &+2\left[\sigma_Y^2-C_Y(\bs-\bt)-C_Y(\bs-\bu)+C_Y(\bt-\bu)\right]^2\\
    &+4\left[\sigma_Y^2-C_Y(\bs-\bt)-C_Y(\bs-\bu)+C_Y(\bt-\bu)\right]\left[C_{XY}(\bs-\bu)-C_{XY}(\bt-\bu)\right]\\
    &\quad\left[C_{XY}(\bs-\bt)-C_{XY}(\bt-\bu)\right]\\
    &+\left\{2\left[\sigma_Y^2+\sigma_e^2-C_Y(\|\bs-\bu\|)\right]+[C_{XY}(\bs-\bt)-C_{XY}(\bt-\bu)]^2\right\}\\
    &\quad\left[C_{XY}(\bs-\bu)-C_{XY}(\bt-\bu)\right]^2.
\end{split}
\end{equation*}
Hence,
\begin{equation*}
\begin{split}
    &\int_{S_n}\int_{S_n}\int_{S_n}
    \lambda_0(\bs)\lambda_0(\bt)\lambda_0(\bu)\brw(\bs,\bt)\brw(\bs,\bu)^{\top}\\
    &\quad\E\left(\exp[X(\bs)+X(\bt)+X(\bu)]\left\{\left[Z^*(\bs)-Z^*(\bt)\right]^2-2\zeta(\|\bs-\bt\|;\btheta_0)\right\}\right.\\
    &\quad\quad\left\{\left.\left[Z^*(\bs)-Z^*(\bu)\right]^2-2\zeta(\|\bs-\bu\|;\btheta_0)\right\}\right)\rd\bs\rd\bt\rd\bu\\
    &=\int_{S_n}\int_{S_n}\int_{S_n}\rho_3(\bs,\bt,\bu)\brw(\bs,\bt)\brw(\bs,\bu)^{\top}\\
    &\quad\ \ \left\{2\left[\sigma_Y^2-C_Y(\bs-\bt)-C_Y(\bs-\bu)+C_Y(\bt-\bu)\right]^2\right.\\
    &\quad\ \ +4\left[\sigma_Y^2-C_Y(\bs-\bt)-C_Y(\bs-\bu)+C_Y(\bt-\bu)\right]\\
    &\quad\quad\ \ \left[C_{XY}(\bs-\bu)-C_{XY}(\bt-\bu)\right]\left[C_{XY}(\bs-\bt)-C_{XY}(\bt-\bu)\right]\\
    &\quad\ \ +\left.\left[C_{XY}(\bs-\bu)-C_{XY}(\bt-\bu)\right]^2[C_{XY}(\bs-\bt)-C_{XY}(\bt-\bu)]^2\right\}\rd\bs\rd\bt\rd\bu,
\end{split}  
\tag{S8}
\end{equation*}
which, under conditions (C3)--(C6) and (C8)--(C9), is also an $O(|S_n|^3)$. Here, $\rho_3(\bs,\bt,\bu)$ is the third-order factorial density function of $N$. Furthermore,
\begin{equation*}
\begin{split}
    &\int_{S_n}\int_{S_n}\lambda_0(\bs)\lambda_0(\bt)\brw(\bs,\bt)\brw(\bs,\bt)^{\top}\\
    &\quad\E\left(\left\{\left[Z^*(\bs)-Z^*(\bt)\right]^2-2\zeta(\|\bs-\bt\|;\btheta_0)\right\}^2\exp[X(\bs)+X(\bt)]\right)\rd\bs\rd\bt\\
    &=8\int_{S_n}\int_{S_n}\rho_2(\bs,\bt)\brw(\bs,\bt)\brw(\bs,\bt)^{\top}
    \left[\sigma_Y^2+\sigma_e^2-C_Y(\bs-\bt)\right]\rd\bs\rd\bt,
\end{split}
\tag{S9}
\end{equation*}
which, under conditions (C3)--(C4) and (C8)--(C9), is an $O(|S_n|^2)$. Collecting the results of (S7)--(S9), we have $\Var[\bU_n^*(\btheta_0)/|S_n|^2]=O(|S_n|^{-1})$, which converges to zero as $n\to\infty$.

Finally, to determine the convergence rate, under Theorem 1, we consider the Taylor expansion:
\begin{equation*}
    |S_n|^{1/2}(\tbtheta_n-\btheta_0)=\left[-\frac{\nabla \bU_n^*(\tilde{\btheta}_n)}{|S_n|^2}\right]^{-1}\frac{\bU_n^*(\btheta_0)}{|S_n|^{3/2}},
\end{equation*}
where $\tbtheta_n$ represents $\tbtheta_{n,MC}$ and $\tbtheta_{n,CL}$, $\nabla U_n^*(\btheta)$ is the gradient of $U_n^*(\btheta)$ with respect to $\btheta$, and $\tilde{\btheta}_n$ is a convex combination of $\tbtheta_n$ and $\btheta_0$. Similar to the derivations above, both $\E[-\nabla \bU_n^*(\btheta_0)/|S_n|^2]$ and $\Var[U_n^*(\btheta_0)/|S_n|^{3/2}]$ converge to constants as $n\to\infty$. Hence, the rate is $O(|S_n|^{-1/2})$.

\end{description}

\end{document}